\newcommand{\Aa}{{\mathcal A}}
\newcommand{\GG}{{\mathcal G}}
\newcommand{\KK}{{\mathcal K}}
\newcommand{\RR}{{\mathcal R}}
\newcommand{\FF}{{\mathcal F}}
\newcommand{\CC}{{\mathcal C}}
\newcommand{\PP}{{\mathcal P}}
\newcommand{\TT}{{\mathcal T}}
\newcommand{\tw}{\text{tw}}
\newtheorem{theorem}{Theorem}[section]
\newtheorem{corollary}[theorem]{Corollary}
\newtheorem{lemma}[theorem]{Lemma}
\begin{document}
\title{Baker game and polynomial-time approximation schemes}
\author{%
     Zden\v{e}k Dvo\v{r}\'ak\thanks{Computer Science Institute (CSI) of Charles University,
           Malostransk{\'e} n{\'a}m{\v e}st{\'\i} 25, 118 00 Prague, 
           Czech Republic. E-mail: \protect\href{mailto:rakdver@iuuk.mff.cuni.cz}{\protect\nolinkurl{rakdver@iuuk.mff.cuni.cz}}.}
}
\date{}
\maketitle
\newcommand{\seq}{\mathbf}
\newcommand{\tail}{\text{tail}}
\newcommand{\head}{\text{head}}

\begin{abstract}
Baker~\cite{baker1994approximation} devised a technique to obtain approximation schemes for many
optimization problems restricted to planar graphs; her technique was later extended 
to more general graph classes.  In particular, using the Baker's technique and the minor structure theorem,
Dawar et al.~\cite{dawar2006approximation} gave Polynomial-Time Approximation Schemes (PTAS) for all monotone
optimization problems expressible in the first-order logic when restricted to a proper minor-closed class of graphs.
We define a \emph{Baker game} formalizing the notion of repeated application of Baker's technique interspersed with vertex
removal, prove that monotone optimization problems expressible in the first-order logic
admit PTAS when restricted to graph classes in which the Baker game
can be won in a constant number of rounds, and prove \emph{without} use of the minor structure theorem that
all proper minor-closed classes of graphs have this property.
\end{abstract}

\section{Introduction}

Baker~\cite{baker1994approximation} devised to polynomial-time approximation schemes for
a range of problems (including maximum independent set, minimum dominating set, largest $H$-matching, minimum vertex cover,
and many others) when restricted to planar graphs.  Her technique (formulated in modern terms) is based on the fact that
if $(V_0,\ldots, V_d)$ is a partition of vertices of a connected planar graph $G$ according to their distance from
an arbitrarily chosen vertex, then for any positive integer $s$ and $0\le i\le d-s+1$, the treewidth of
$G[V_i\cup V_{i+1}\cup \ldots\cup V_{i+s-1}]$ is bounded by a function of $s$ (more specifically, it is at most $3s$~\cite{rs3}),
and hence many natural problems can be exactly solved for these subgraphs in linear time~\cite{courcelle}.

It is natural to ask whether these algorithms can be extended to larger classes $\GG$ of graphs.
A \emph{layering} of a graph $G$ is a function $\lambda:V(G)\to \mathbb{Z}$ such that $|\lambda(u)-\lambda(v)|\le 1$ for every edge $uv\in E(G)$;
one should visualize the vertices of $G$ partitioned into layers $\lambda^{-1}(i)$ for $i\in \mathbb{Z}$, with edges of $G$ only allowed inside
the layers and between the consecutive layers.
Let us say that a class $\GG$ has \emph{bounded treewidth layerings} if for some function $f$,
each graph $G\in\GG$ has a layering $\lambda$ such that $G[\lambda^{-1}(I)]$ has treewidth at most $f(|I|)$
for any finite interval $I$ of consecutive integers.
Baker's technique directly extends to all such graph classes (assuming that a suitable layering can be found
in polynomial time).  A natural obstruction for the existence of bounded treewidth layerings is as follows:
let $G_n$ be the graph obtained from the $n\times n$ grid by adding a universal vertex adjacent to all other vertices.
Then each layering of $G_n$ has at most three non-empty layers, and $\tw(G_n)>n$.  Hence, if a class $\GG$ has bounded treewidth layerings,
then it contains only finitely many of the graphs $\{G_n:n\ge 1\}$.  Conversely, Eppstein~\cite{eppstein00} proved that
minor-closed classes that do not contain all such graphs $G_n$ have bounded treewidth layerings.

As another obstruction, let $U_n$ denote the graph obtained from an $n\times n\times n$ grid by adding all diagonals to its unit
subcubes.  Although the graphs $U_n$ have bounded maximum degree and very simple structure, 
Dvo\v{r}\'ak et al.~\cite{gridtw} proved that for every integer $k$, there exists $n_0$ such that for all $n\ge n_0$,
the vertex set of $U_n$ cannot be partitioned into two parts both inducing subgraphs of treewidth at most $k$.
This prevents existence of bounded treewidth layerings of $U_n$, since otherwise the partition of the graph into odd and even numbered layers
would give a contradiction.

Of course, one can work around these obstructions:
\begin{itemize}
\item Each of the graphs $G_n$ contains the universal vertex $v$ such that the class $\{G_n-v:n\ge 1\}$ has bounded treewidth layerings;
and for many optimization problems, one can devise a reduction from the problem in $G$ to a variant of the problem in $G-v$
(possibly encoding the neighborhood of $v$ by coloring vertices of $G-v$).
\item The grids $U_n$ have the property that in their distance layering, the unions of bounded numbers of layers induce subgraphs
which themselves have bounded treewidth layerings, making it possible to iterate Baker's technique.
\end{itemize}
Let $\CC$ be a class of graphs.  Informally, we will say that $\CC$ is a \emph{Baker class} if
each graph from $\CC$ can be reduced to an empty graph by a constant number of iterations of these operations
(removal of vertices, arbitrary choice of a bounded number of consecutive layers in a layering).
To enable our intended application to proper minor-closed classes,
we need to allow the number of iterations depend not only on $\CC$, but also on the number of layers we select
from each layering (which in itself depends on the considered optimization problem and the desired precision $\varepsilon$
of the approximation).  This becomes problematic if the number of iterations is large compared to $1/\varepsilon$,
as the errors accumulate in each iteration.  To deal with this issue, we allow the number of layers selected from
the layering to grow in each iteration.  As the resulting definition is rather technical, we postpone it to Section~\ref{sec-bakergame}.
Let us remark that the iteration in the definition of a Baker class ends in an empty graph
(rather than a graph of bounded treewidth).  Stopping when we reach graphs of bounded treewidth would not result in a more
general property, since classes with bounded treewidth are themselves Baker.

Building upon the work of Dawar et al.~\cite{dawar2006approximation}, we show that monotone optimization problems expressible in the first-order
logic admit Polynomial-Time Approximation Schemes on Baker classes.
Throughout the paper, we work with first-order formulas on graphs, i.e., formulas using a single irreflexive symmetric
binary predicate $e$ (interpreted by the adjacency in a graph), any number of unary predicates (interpreted as colors on vertices of the graph),
quantification on variables for vertices of the graph, equality, and standard logic operators $\land$, $\lor$, and $\neg$ (with
other operators such as $\Rightarrow$ expressed in terms of these basic operators).
A \emph{graph language} $L$ consists of the binary predicate symbol $e$ and a finite set of unary predicate symbols.
An \emph{$L$-interpretation} $G$ consists of a graph $\overline{G}$ and a set $S_C$ of vertices of $G$ for each unary symbol $C\in L$;
the binary symbol $e$ is naturally interpreted as the set of all pairs of adjacent vertices of $\overline{G}$,
while for each unary symbol $R$ and vertex $v\in V(\overline{G})$, we have $G,x\colonequals v\models C(x)$ if and only if $v\in S_C$.
Otherwise, the semantics of first-order formulas is defined in the usual way.

Consider a first-order formula $\varphi$ using a unary predicate $X$.  We say that $\varphi$ is \emph{$X$-positive}
if all appearances of $X$ in $\varphi$ are within the scope of an even number of negations, and \emph{$X$-negative}
if all appearances of $X$ in $\varphi$ are within the scope of an odd number of negations.
Suppose $A\subseteq B$; note that if $\varphi$ is $X$-positive, then $X\colonequals A\models \varphi$ implies $X\colonequals B\models \varphi$;
and if $\varphi$ is $X$-negative $X\colonequals B\models \varphi$ implies $X\colonequals A\models \varphi$.

A \emph{positive FO minimization problem} given by an $X$-positive first-order sentence $\varphi$ over a graph language $L\cup\{X\}$
seeks, for an input $L$-interpretation $G$, to find
a set $A\subseteq V(\overline{G})$ of the minimum size such that $G,X\colonequals A\models \varphi$.  Let the minimum size of such set
be denoted by $\gamma_{\varphi}(G)$.  The basic example is the \emph{minimum dominating set} problem, given by the sentence
$$(\forall x) \bigl[X(x)\lor (\exists y)(e(x,y)\land X(y))\bigr];$$
other examples include the minimum vertex cover and minimum-size set intersecting all triangles.

A \emph{negative FO maximization problem} given by an $X$-negative first-order sentence $\varphi$ over a graph language $L\cup\{X\}$
seeks, for an input $L$-interpretation $G$, to find
a set $A\subseteq V(\overline{G})$ of the maximum size such that $G,X\colonequals A\models \varphi$.  Let the maximum size of such set
be denoted by $\alpha_\varphi(G)$.  The basic example is the \emph{maximum independent set} problem, given by the sentence
$$(\forall x,y) e(x,y)\Rightarrow (\neg X(x)\lor\neg X(y));$$
other examples include the maximum $r$-independent set for any fixed integer $r$ (i.e., the largest subset of vertices
at distance greater than $r$ from each other) and largest induced subgraph of maximum degree at most $d$ for any fixed integer $d$.

Note that if $\varphi'$ is the $X$-positive sentence obtained from an $X$-negative sentence $\varphi$ by negating
every appearance of $X$, then $\alpha_\varphi(G)=|V(\overline{G})|-\gamma_{\varphi'}(G)$; nevertheless, in case that
$\alpha_\varphi(G)=o(|V(\overline{G})|)$, even a very precise approximation of $\gamma_{\varphi'}(G)$ does not give a good
approximation of $\alpha_\varphi(G)$, and thus from the approximation perspective, the two problems are distinct.
We give a PTAS for both variants (see Section~\ref{ssec-alg} for the definition of an \emph{$(f,s)$-efficiently Baker class}).

\begin{theorem}\label{thm-approx}
Suppose $\CC$ is an $(f,s)$-efficiently Baker class of graphs.
There exists an algorithm that given a first-order $X$-positive sentence $\varphi$ over a graph language $L\cup \{X\}$,
an integer $k$ and an $L$-interpretation $G$ with $n$ vertices
such that $\overline{G}\in\CC$, in time $f(n)+O_{\CC,\varphi,k}(n+s(n))$ finds a set $A\subseteq V(\overline{G})$
satisfying
$$G,X\colonequals A\models \varphi$$
and $|A|\le (1+1/k)\gamma_\varphi(G)$, or determines no such set $A$ exists.
\end{theorem}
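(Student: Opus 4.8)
The plan is to combine the Baker-game structure with Courcelle-type dynamic programming and the monotone semantics of $X$-positive sentences. First I would set up the key subroutine: on a graph of treewidth at most $w$, given a colored graph $G$ and an $X$-positive sentence $\varphi$, compute exactly a minimum-size set $A$ with $G,X\colonequals A\models\varphi$, together with its size, in time $O_{w,\varphi}(n)$. This follows from the standard machinery: $\varphi$ can be translated into an MSO (in fact FO) sentence over the tree decomposition, $X$ becomes an existentially quantified set variable whose size we optimize, and Courcelle's theorem (in the optimization form, cf.~\cite{courcelle}) yields a linear-time algorithm once the decomposition is given; the $(f,s)$-efficiency is precisely what lets us obtain a bounded-width decomposition in the allotted time.

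The main step is the recursion along the Baker game. A single round of Baker's technique, applied with parameter $s$ to the layering $\lambda$ provided by the winning strategy, partitions $V(\overline G)$ into layers; for each residue class $r\in\{0,\dots,s-1\}$ we delete the layers congruent to $r$ (modulo $s$), which breaks the graph into pieces, each being an induced subgraph on $s-1$ consecutive layers. By the definition of a Baker class these pieces, after the move in the game, are handled recursively (with a larger layer-selection parameter, exactly as the definition permits, so that the accumulated error stays controlled). The crucial observation is monotonicity: since $\varphi$ is $X$-positive, if $A_r$ is a feasible set for the instance with the chosen residue class deleted, then padding $A_r$ with \emph{all} deleted vertices yields a feasible set for $G$; hence for the best residue class $r$ we get $|A_r|+(\text{number of deleted layers}) $ as an upper bound, and averaging over the $s$ residue classes shows that the deleted vertices contribute at most a $(1/s)$-fraction, i.e. at most $\gamma_\varphi(G)/s$ extra (after relating the layer sizes to an optimum solution — one should choose the offset so that the deleted layers are "cheap" with respect to some fixed optimum witness of $\gamma_\varphi(G)$). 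Iterating over a constant number $d$ of rounds, with the layer-selection parameter growing so that round $i$ contributes at most an $\varepsilon_i$-fraction and $\sum_i \varepsilon_i \le 1/k$, gives the claimed $(1+1/k)$-approximation.

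I would then assemble the algorithm: compute the winning strategy's layering, try all offsets/residues, recurse, and at the base of the recursion (empty graph, or bounded treewidth) solve exactly via the Courcelle subroutine; return the best set found across the polynomially many branches. Feasibility of the returned $A$ is immediate because we only ever enlarge feasible sets by vertices outside the current piece, and $X$-positivity preserves satisfaction. The "no such set exists" case is detected when even $A\colonequals V(\overline G)$ fails $\varphi$ (again by monotonicity this is the only obstruction, and it is checkable in linear time on the bounded-treewidth pieces, or directly).

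The hard part will be the bookkeeping of the error across a non-constant-looking but ultimately constant recursion: one must choose the sequence of layer-selection parameters $s=s_0 < s_1 < \cdots$ as a function of $k$, $\varphi$ and the Baker-game strategy so that (a) the recursion depth is the constant $d$ guaranteed by the class being Baker with those parameters, (b) the running time stays $f(n)+O_{\CC,\varphi,k}(n+s(n))$ — in particular the treewidth bound invoked at the leaves depends only on $\CC,\varphi,k$ — and (c) the multiplicative errors $(1+\varepsilon_i)$ telescope to at most $1+1/k$. Making (a) and (c) compatible is exactly the subtlety flagged in the introduction (errors accumulating when the iteration count is large relative to $1/\varepsilon$), and it is resolved precisely by letting the layer count grow per round; verifying that the formal definition of an $(f,s)$-efficiently Baker class supports this choice is where most of the care goes.
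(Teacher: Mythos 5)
There is a genuine gap in your handling of the layering (Restrict) step, and it is the heart of the theorem. You propose the ``delete every $s$-th block of layers, solve the pieces independently, and pad the solution with all deleted vertices'' scheme. This fails on two counts for a \emph{minimization} problem. First, the approximation accounting is wrong: averaging over the $s$ residue classes bounds the number of deleted \emph{vertices} by $n/s$ on average, not by $\gamma_\varphi(G)/s$; since $\gamma_\varphi(G)$ may be $o(n)$, padding with all deleted vertices destroys the approximation ratio, and you cannot instead pad with only ``the optimum's vertices in the deleted layers'' because you do not know the optimum and, more importantly, that smaller padding no longer guarantees feasibility. Second, even the full padding does not guarantee feasibility: if $G-D_r, X\colonequals A_r\models\varphi$, it does not follow that $G, X\colonequals A_r\cup D_r\models\varphi$, because re-inserting the deleted vertices changes the ranges of all quantifiers in $\varphi$; $X$-positivity only licenses enlarging $X$ within a \emph{fixed} structure. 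This is exactly why the paper never deletes layers in the Restrict step. Instead it first normalizes $\varphi$ via Gaifman locality (Theorem~\ref{thm-local}) into a disjunction of basic conjunctions with bounded spread $m$ and range $r$, covers $\mathbb{Z}$ by intervals of length $\ell$ overlapping in $2r$ layers (the $(\ell,r)$-covers), solves on each window with a modified sentence ($(C,M,k)$-variants, with the existential/universal ``budget'' $m$ distributed among windows by $m$-plans found by dynamic programming), and glues by Lemma~\ref{lemma-cert}; the $\varepsilon_2\gamma$ error comes from optimum vertices counted twice in the overlaps, and Lemma~\ref{lemma-certex} shows a good cover exists. None of this machinery is replaceable by the monotonicity observation you invoke.

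Two smaller discrepancies: the Baker game terminates at the \emph{empty} graph, not at bounded treewidth, so Courcelle's theorem is never used (the paper points this out explicitly in Section~\ref{sec-other}); the intermediate graphs arising after a Restrict action need not have bounded treewidth, which is precisely why the game must be iterated. And you do not address the Delete action at all, which in the paper requires rewriting the sentence on $G-v$ with the neighborhood of $v$ recorded as a new unary predicate (Lemma~\ref{lemma-remv}), branching on whether $v$ is placed in the solution. Your point (c) about choosing $\varepsilon_i$ with $\prod_i(1+\varepsilon_i)\le 1+1/k$ and letting the window lengths grow per round does match the paper, but it cannot rescue a per-round error analysis that is not valid to begin with.
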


\begin{theorem}\label{thm-approx-max}
Suppose $\CC$ is an $(f,s)$-efficiently Baker class of graphs.
There exists an algorithm that given a first-order $X$-negative sentence $\varphi$ over a graph language $L\cup \{X\}$,
an integer $k$ and an $L$-interpretation $G$ with $n$ vertices
such that $\overline{G}\in\CC$, in time $f(n)+O_{\CC,\varphi,k}(n+s(n))$ finds a set $A\subseteq V(\overline{G})$
satisfying
$$G,X\colonequals A\models \varphi$$
and $|A|\ge (1-1/k)\alpha_\varphi(G)$, or determines no such set $A$ exists.
\end{theorem}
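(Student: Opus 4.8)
The plan is to reduce Theorem~\ref{thm-approx-max} to the already-discussed machinery used for the positive (minimization) variant by carefully exploiting the interplay between the Baker game structure and $X$-negativity, while being mindful that a direct complementation argument fails for approximation when $\alpha_\varphi(G)=o(n)$. First I would unwind the definition of an $(f,s)$-efficiently Baker class: winning the Baker game in a bounded number of rounds yields, for each round, a layering of the current graph together with a choice of a bounded number $t$ of consecutive layers such that the union of the remaining layers splits into pieces, one of which has bounded treewidth and the others lie in ``smaller'' graphs on which we recurse. The key point is that across the (constantly many) rounds, the number $t$ of selected layers is allowed to grow, which is precisely what lets the accumulated approximation error stay controlled as a function of $k$.

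The core step is a locality argument in the spirit of Gaifman's theorem: because $\varphi$ is first-order, its truth at any vertex depends only on a bounded-radius neighborhood, so the deletion of a ``buffer'' consisting of $O(1)$ consecutive layers from a layering decouples the two sides. Concretely, I would fix $r$ to be roughly the quantifier rank of $\varphi$ (so that $r$-neighborhoods determine satisfaction of all subformulas), take the selected band of layers to have width at least $2r+1$, and argue that an optimal (or near-optimal) set $A^\ast$ witnessing $\alpha_\varphi(G)$ can be modified by discarding $A^\ast$'s intersection with the band; since $\varphi$ is $X$-negative, shrinking $X$ preserves $\models\varphi$, so the restricted set is still feasible on each piece, and we lose only the vertices in the band. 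Iterating over the rounds, the total loss is a product of ``band / total'' ratios that we make at most $1/k$ by choosing the band widths (which, crucially, may grow per round) large but still constant relative to the layer count, exactly as the $(f,s)$-efficiency bookkeeping permits. On each bounded-treewidth piece we solve the problem exactly — here I would invoke Courcelle-type/MSO or direct dynamic-programming optimization over subsets $A$ with $G,X\colonequals A\models\varphi$, which runs in linear time once treewidth is bounded — and on the recursive pieces we apply the algorithm inductively; combining the solutions is safe because the band separates the pieces up to radius $r$, so feasibility is preserved when we take the union of partial solutions (again using $X$-negativity to absorb any boundary interference).

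The main obstacle I expect is the bookkeeping that makes the error bound come out to $(1-1/k)$ rather than something that degrades with the number of rounds: since the rounds are constantly many but the per-round ``loss fraction'' must multiply to at most $1/k$, each round can afford loss only about $1/(k\cdot\text{(number of rounds)})$, which forces the selected band in round $j$ to be a $1/\Theta(k)$ fraction of the layers \emph{at that stage} — and this is exactly why the definition of efficiently Baker lets $t$ grow with the round index and ties it to $s(n)$. Getting this quantitative matching right, and verifying that the growth permitted by the definition suffices, is the technical heart. A secondary subtlety is the ``or determines no such set $A$ exists'' clause: this requires checking feasibility of $\varphi$ at all (ignoring the size objective), which is itself a first-order model-checking question on $\CC$; I would handle it by noting that $G,X\colonequals V(\overline G)\models\varphi$ iff some feasible $A$ exists when $\varphi$ is $X$-negative (the largest candidate is the full vertex set), reducing feasibility to a single model-checking call, decidable in the claimed time on Baker classes by the same layering decomposition. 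Finally I would assemble the running time: the $f(n)$ term accounts for computing the Baker-game strategy/layerings, and the $O_{\CC,\varphi,k}(n+s(n))$ term absorbs the constantly many rounds of linear-time dynamic programming on bounded-treewidth pieces plus the recursive calls, with the $s(n)$ overhead coming from the efficiency parameter of the class.
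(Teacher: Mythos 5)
There is a genuine gap at the heart of your combination step. You claim that ``feasibility is preserved when we take the union of partial solutions \ldots using $X$-negativity to absorb any boundary interference,'' but $X$-negativity works \emph{against} you here: from the point of view of each piece, forming a union makes the interpreted set $X$ \emph{larger}, and enlarging $X$ can destroy satisfaction of an $X$-negative sentence. This is precisely why the paper's Theorem~\ref{thm-layap-neg} replaces $\varphi^{(q)}$ by $\varphi^{(q),C}\equiv\varphi^{(q)}\land(\forall x)(C(x)\lor\neg X(x))$: the extra conjunct confines each partial solution $A_{I,p(I)}$ to the inner part $\lambda^{-1}(M_{2r}(I))$ of its interval, so that the union restricted to each interval is exactly the partial solution computed there and no interference occurs. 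Without some such confinement device your union may simply fail to satisfy $\varphi$. Relatedly, you never explain how satisfaction of a \emph{general} FO sentence on the whole graph is certified from the pieces; the paper needs Gaifman normal form (basic existential/universal sentences with spread $m$), the $(C,M,k)$-variants, $m$-plans distributing the spread budget over the intervals of an $(\ell,r)$-cover, and Lemmas~\ref{lemma-cert} and~\ref{lemma-certex}, plus a dynamic program over plans. Your sketch only covers the independent-set-style case where disjoint unions are trivially feasible.

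Two further points. First, your feasibility test is backwards: for an $X$-negative $\varphi$ the easiest candidate is the \emph{empty} set (shrinking $X$ preserves satisfaction), so some feasible $A$ exists iff $G,X\colonequals\emptyset\models\varphi$, not $X\colonequals V(\overline G)$. Second, your picture of the Baker game is off in ways that matter: the Preserver chooses the interval of layers adversarially, so the algorithm must recurse on \emph{every} interval of every cover and stitch the answers together (not discard a band of its own choosing); the game bottoms out at the empty graph, not at bounded treewidth, so no Courcelle-type subroutine is invoked; and the Delete action requires the reduction of Lemma~\ref{lemma-remv} (branching on whether the deleted vertex is in $A$ and recording its neighborhood with a fresh unary predicate), which your proposal omits. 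The error bookkeeping you worry about is handled in the paper by fixing an infinite sequence $\varepsilon_1>\varepsilon_2>\cdots$ with $\prod_i(1+\varepsilon_i)\le 1+1/k$ \emph{before} determining the interval lengths $\ell_i$ and hence the number of rounds, which breaks the circularity you correctly identify but do not resolve.
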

As we alluded to before, we prove that any proper minor-closed class of graphs is $(O(n^2),O(n))$-efficiently Baker,
and thus Theorems~\ref{thm-approx} and \ref{thm-approx-max} give PTASes with time complexity $O_{\CC,\varphi,k}(n^2)$
for proper minor-closed classes.
Unlike us, Dawar et al.~\cite{dawar2006approximation} only give algorithms with time complexity $n^{O_{\CC,\varphi,k}(1)}$.
More importantly, their argument uses the minor structure theorem~\cite{robertson2003graph}, and thus it is specific
only to proper minor-closed classes and the multiplicative constants hidden in the O-notation are huge.
Our results are in the setting of more general Baker classes, and the proof that proper minor-closed classes are Baker
is direct, without using the minor structure theorem; consequently, the multiplicative constants of the O-notation are
more manageable (although still impractically large).

The paper is organized as follows: In Section~\ref{sec-bakergame}, we give the definition of Baker classes via Baker games introduced in
Section~\ref{sec-game}, and work out the properties of this concept, culminating in the proof that proper minor-closed classes
are Baker in Section~\ref{sec-minors}.  We discuss the algorithmic version of the concept in Section~\ref{ssec-alg},
and another Baker class in Section~\ref{sec-distort}.  In Section~\ref{sec-ptas}, we give polynomial-time approximation schemes
for Baker classes: we prove Theorem~\ref{thm-approx} in Section~\ref{sec-positive}, Theorem~\ref{thm-approx-max} 
in Section~\ref{sec-negative}, and discuss problems not expressible in the first-order logic in Section~\ref{sec-other}.

\subsection{Related results}

As we already mentioned in the introduction, Eppstein~\cite{eppstein00} and Demaine and Hajiaghayi~\cite{demaine2004equivalence}
showed that Baker's technique generalizes
to all proper minor-closed classes that do not contain all apex graphs.  Going beyond the apex graph boundary, Grohe~\cite{grohe2003local},
Demaine et al.~\cite{demaine2005algorithmic} and Dawar et al.~\cite{dawar2006approximation} generalized
the approximation algorithms to all proper minor-closed classes using the tree decomposition from the minor structure theorem.
Baker's technique also applies to other geometrically defined graph classes, such as unit disk graphs~\cite{hunt1998nc}
and graphs embedded with a bounded number of crossings on each edge~\cite{grigoriev2007algorithms}.

To go beyond classes with bounded treewidth layerings, Dvo\v{r}\'ak~\cite{twd} introduced a weaker notion of
\emph{fractional treewidth-fragility}: instead of requiring the existence of a layering where consecutive layers induce subgraphs
of bounded treewidth, fractional treewidth-fragility only requires the existence of many nearly-disjoint subsets of vertices
whose removals results in a graph of bounded treewidth.  This notion is sufficient to obtain polynomial-time approximation schemes for some graph parameters
such as the independence number or the size of the largest $H$-matching, but fails for others (minimum dominating set,
distance constrained versions of the independence number). On the other hand, all proper minor-closed classes are
fractionally treewidth-fragile~\cite{devospart}, and so are all subgraph-closed graph classes with bounded maximum degree
and strongly sublinear separators~\cite{twd}.  Dvořák~\cite{thin} later introduced a stronger notion of
\emph{thin systems of overlays}, also applicable to these classes (via the minor structure theorem)
and sufficient to obtain PTASes for more problems (although not all that can be handled using our approach).
Note it is easy to see every Baker class has thin systems of overlays, and in particular this gives a proof
that proper minor-closed classes have thin systems of overlays and are fractionally treewidth-fragile without
using the minor structure theorem.

Another algorithmic approach for proper minor-closed graph classes
is through the bidimensionality theory, bounding the treewidth of the graph in terms of the size of the optimal solution
and exploiting the arising bounded-size balanced separators to obtain approximate solutions.
Demaine and Hajiaghayi~\cite{demaine2005bidimensionality} and Fomin et al.~\cite{fomin2011bidimensionality}
use this approach to construct polynomial-time approximation schemes on all proper minor-closed classes
for many minor-monotone problems (e.g., minimum vertex cover) and on apex-minor-free classes for contraction-monotone
problems (e.g., minimum dominating set).

A very different approach is taken by Cabello and Gajser~\cite{cabello2015simple} for proper minor-closed classes and more generally by
Har-Peled and Quanrud~\cite{har2015approximation} for classes of graphs with polynomial expansion (which by the
result of Dvo\v{r}\'ak and Norin~\cite{dvorak2016strongly} is equivalent to having strongly sublinear separators).
They showed that the trivial local search algorithm (performing bounded-size changes on an initial solution as long as it can
be improved by such a change) gives polynomial-time approximation schemes for 
maximum independent and minimum dominating set, as well as many other related problems.
It is an open problem whether some variation on this approach can give PTAS for all monotone FO optimization problems.

Let us remark that since the approximation factor does not affect the exponent in the complexity in Theorems~\ref{thm-approx}
and \ref{thm-approx-max}, we also obtain fixed-parameter tractability for all the considered problems when parameterized by
the order of the optimum solution.  However, the fixed-parameter tractability of these problems has been established
in greater generality, see~\cite{dvorak2013testing,grohe2014deciding}.
Our approach was in part inspired by~\cite{grohe2014deciding}; to obtain their result, they introduce Splitter games, and a winning
strategy for a Baker game translates into a winning strategy for the Splitter game (but not vice versa).

\section{Baker game}\label{sec-bakergame}

\subsection{Layerings and ordered graphs}\label{sec-prelim}

Let us start with some preliminaries.

Let $G$ be a graph and consider a function $\lambda:V(G)\to\mathbb{Z}$.
If $G$ is connected, $v_0\in V(G)$, and $\lambda(v)$ is equal to the distance
from $v_0$ to $v$ in $G$ for each $v\in V(G)$, then $\lambda$ is a layering; we call this layering the \emph{BFS layering starting from $v_0$}.
If $G$ has components $G_1$, \ldots, $G_m$ and for some positive integer $r$, $\lambda(v)=ri$ holds for every $i\in\{1,\ldots,m\}$ and $v\in V(G_i)$,
then $\lambda$ also is a layering, which we call the \emph{$r$-spread componentwise layering}.
For a layering $\lambda$, the \emph{width} of $\lambda$ is defined as $\sup\{|\lambda^{-1}(i)|:i\in\mathbb{Z}\}$.

For any two vertices $x$ and $y$ of a graph $G$, let $d_G(x,y)$ denote the distance between $x$ and $y$ in $G$.
For $P\subseteq V(G)$, a layering $\lambda$ of $G[P]$ is \emph{$G$-geodesic} if $d_G(x,y)\ge |\lambda(x)-\lambda(y)|$ for every $x,y\in P$.
We need the following observation on extendability of layerings.
\begin{lemma}\label{lemma-exlay}
Let $G$ be a graph and let $P\subseteq V(G)$. A layering $\lambda$ of $G[P]$
extends to a layering of $G$ if and only if $\lambda$ is $G$-geodesic.
\end{lemma}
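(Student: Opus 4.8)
The plan is to prove the easy implication directly and to obtain the converse from an explicit McShane--Whitney-type extension formula.

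First I would treat the implication that extendability implies the $G$-geodesic property. Suppose $\lambda$ extends to a layering $\mu$ of $G$ (so $\mu(v)=\lambda(v)$ for $v\in P$). If $x,y\in P$ lie in the same component of $G$, take a shortest $x$--$y$ path $x=z_0,z_1,\ldots,z_\ell=y$ with $\ell=d_G(x,y)$; using $|\mu(z_{i-1})-\mu(z_i)|\le 1$ along the path gives $|\lambda(x)-\lambda(y)|=|\mu(x)-\mu(y)|\le\ell=d_G(x,y)$. If $x,y$ lie in distinct components then $d_G(x,y)=\infty$ and there is nothing to prove. Hence $\lambda$ is $G$-geodesic.

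For the converse, suppose $\lambda$ is $G$-geodesic. I would build the extension $\mu$ one connected component of $G$ at a time. On a component $C$ with $V(C)\cap P=\emptyset$ there is no constraint coming from $\lambda$, so I set $\mu$ to be constant on $C$ (any layering of $C$ works). On a component $C$ with $V(C)\cap P\ne\emptyset$, define
$$\mu(v)=\max\{\,\lambda(p)-d_G(v,p)\ :\ p\in V(C)\cap P\,\}\qquad\text{for all }v\in V(C).$$
Three checks are needed. (i) The maximum is well defined: fixing any $q\in V(C)\cap P$, the triangle inequality together with the $G$-geodesic property gives $\lambda(p)-d_G(v,p)\le\lambda(q)+d_G(p,q)-d_G(v,p)\le\lambda(q)+d_G(v,q)$ for every $p\in V(C)\cap P$, so the values form a nonempty set of integers bounded from above and hence attain a maximum; in particular $\mu$ is integer valued. (ii) $\mu$ restricts to $\lambda$ on $V(C)\cap P$: for $q\in V(C)\cap P$ the term $p=q$ contributes $\lambda(q)$, while for every $p\in V(C)\cap P$ the $G$-geodesic property gives $\lambda(p)-\lambda(q)\le d_G(p,q)$, i.e.\ $\lambda(p)-d_G(q,p)\le\lambda(q)$, so the maximum equals $\lambda(q)$. (iii) $\mu$ is a layering: if $uv\in E(C)$ and $p$ attains the maximum defining $\mu(u)$, then $\mu(v)\ge\lambda(p)-d_G(v,p)\ge\lambda(p)-d_G(u,p)-1=\mu(u)-1$, and by symmetry $\mu(u)\ge\mu(v)-1$, so $|\mu(u)-\mu(v)|\le 1$. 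Since every edge of $G$ joins two vertices of one component, assembling these pieces yields a layering $\mu$ of $G$ extending $\lambda$.

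I expect the only delicate points to be bookkeeping rather than substance: separating out the components of $G$ that miss $P$ (where $\lambda$ imposes no condition) and verifying that the maximum in the displayed formula is actually attained even when $G$ is infinite; both are handled by the triangle-inequality bound in step (i). The conceptual core is just that a map on $P$ that is $1$-Lipschitz with respect to $d_G$ extends to a $1$-Lipschitz map on $V(G)$, which the formula makes explicit.
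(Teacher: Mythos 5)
Your proof is correct and uses essentially the same idea as the paper: the extension formula $\mu(v)=\max\{\lambda(p)-d_G(v,p):p\in P\}$ is exactly the one the paper uses, the only difference being that the paper adds one vertex at a time by induction on $|V(G)\setminus P|$ while you define the whole extension at once (and explicitly handle components disjoint from $P$). Both verifications rest on the same triangle-inequality computation, so this is the same proof in a slightly different packaging.
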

\begin{proof}
Any layering $\lambda'$ of $G$ satisfies $d_G(x,y)\ge |\lambda'(x)-\lambda'(y)|$ for every $x,y\in V(G)$; hence,
if $\lambda$ extends to a layering of $G$, then $\lambda$ is $G$-geodesic.

Let us now argue that every $G$-geodesic layering of $G[P]$ extends to a layering of $G$.
We prove the claim by induction of $|V(G)\setminus P|$. The case $V(G)=P$ being trivial, we can assume
there exists a vertex $v\in V(G)\setminus P$.
Let us define $\lambda(v)=\max\{\lambda(x)-d_G(x,v):x\in P\}$.  By definition, we have $\lambda(v)\ge \lambda(x)-d_G(x,v)$,
and thus $d_G(x,v)\ge \lambda(x)-\lambda(v)$ for every $x\in P$.  Also, there exists $y\in P$ such that $\lambda(v)=\lambda(y)-d_G(y,v)$.
Consider any $x\in P$; by the triangle inequality we have $d_G(y,v)\ge d_G(x,y)-d_G(x,v)$, and since $\lambda$ is $G$-geodesic, we conclude
$\lambda(v)-\lambda(x)=\lambda(y)-\lambda(x)-d_G(y,v)\le \lambda(y)-\lambda(x)-d_G(x,y)+d_G(x,v)\le d_G(x,v)$.
Consequently, with this definition of $\lambda(v)$, we have $d_G(x,y)\ge |\lambda(x)-\lambda(y)|$ for every $x,y\in P\cup \{v\}$.
In particular, if $xv\in E(G)$ for $x\in P$, then $|\lambda(x)-\lambda(v)|\le 1$, and thus $\lambda$ is a $G$-geodesic layering of $G[P\cup\{v\}]$.
By the induction hypothesis, $\lambda$ extends to a layering of $G$.
\end{proof}

An \emph{ordered graph} is a graph together with a linear ordering of its vertices.  If $G'$ is a subgraph of an ordered graph $G$,
the ordering of vertices of $G'$ is naturally the ordering of vertices of $G$ restricted to $V(G')$.

\subsection{Rules of the game}\label{sec-game}

For an infinite sequence $\seq{r}=r_1,r_2,\ldots$ and an integer $s\ge 0$, let $\tail_s(\seq{r})$ denote the sequence $r_{s+1}, r_{s+2}, \ldots$,
let $\tail(\seq{r})=\tail_1(\seq{r})$, and let $\head(\seq{r})=r_1$.
A sequence $\seq{r'}=r'_1,r'_2,\ldots$ is \emph{dominated by $\seq{r}$} if $r'_i\le r_i$ for every $i\in\mathbb{N}$.

The \emph{Baker game} between two players (Destroyer and Preserver) is defined as follows.
The state of the game is a pair $(G,\seq{r})$, where $G$ is an ordered graph and $\seq{r}$ is an infinite
sequence. If $V(G)=\emptyset$, then the game stops.  Otherwise, Destroyer chooses one of the following actions:
\begin{itemize}
\item[\textbf{Delete}] Destroyer deletes the smallest vertex $v$ from the graph; Preserver takes no action and
the game proceeds with the state $(G-v,\tail(\seq{r}))$.
\item[\textbf{Restrict}] Destroyer selects a layering $\lambda$ of $G$.  Preserver selects an interval $I$ of at most $\head(\seq{r})$
consecutive integers and the game proceeds with the state $(G[\lambda^{-1}(I)],\tail(\seq{r}))$.
That is, Preserver selects $\head(\seq{r})$ consecutive layers
and deletes the rest of the graph.
\end{itemize}
Destroyer seeks to minimize the number of rounds of this game; we say that Destroyer \emph{wins in $t$ rounds} on the state $(G,\seq{r})$ if
regardless of Preserver's strategy, the game stops after at most $t$ rounds.

Let us remark that in the Delete action, we could more generally
allow the Destroyer to delete any vertex, not just the smallest one; however, throughout this paper we only deal with the ``monotone'' strategies
where the vertices are deleted in order, and it is more convenient to formulate the game already including this assumption rather than repeating
it everywhere. Furthermore, since the Delete action does not depend on the sequence $\seq{r}$, it is tempting not to consume its
first element, and proceed with the state $(G-v,\seq{r})$ rather than $(G-v,\tail(\seq{r}))$.  However, in that setting we would run into
problems in some of the arguments below, in particular in the proof of Lemma~\ref{lemma-compose}.

\subsection{Basic properties}

We will often use the following observation.
\begin{lemma}\label{lemma-sg}
Suppose Destroyer wins the Baker game on the state $(G,\seq{r})$ in $t$ rounds.
If $G'$ is a subgraph of $G$ and $\seq{r}$ dominates a sequence $\seq{r'}$, then
Destroyer wins the Baker game on the state $(G',\seq{r'})$ in $t$ rounds.
\end{lemma}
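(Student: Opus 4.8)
The plan is to proceed by induction on $t$, the number of rounds in which Destroyer wins on $(G,\seq{r})$, following Destroyer's winning strategy on $(G,\seq{r})$ and translating each move into a move on $(G',\seq{r'})$ that does not increase the round count. The base case $t=0$ is immediate: if Destroyer wins in $0$ rounds on $(G,\seq{r})$, then $V(G)=\emptyset$, hence $V(G')=\emptyset$ as well, so Destroyer wins in $0$ rounds on $(G',\seq{r'})$. For the inductive step, assume the statement for all values smaller than $t$, and suppose Destroyer wins in $t\ge 1$ rounds on $(G,\seq{r})$; in particular $V(G)\ne\emptyset$. We consider Destroyer's first action in its winning strategy on $(G,\seq{r})$.

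If that action is \textbf{Delete}, Destroyer removes the smallest vertex $v$ of $G$ and wins on $(G-v,\tail(\seq{r}))$ in $t-1$ rounds. On the state $(G',\seq{r'})$ we want Destroyer to also play \textbf{Delete}. If $v\notin V(G')$, then $G'$ is a subgraph of $G-v$, and since $\tail(\seq{r})$ dominates $\tail(\seq{r'})$ (domination is clearly preserved by taking tails), the induction hypothesis applied to $(G-v,\tail(\seq{r}))$ gives that Destroyer wins on $(G',\tail(\seq{r'}))$ in $t-1$ rounds; playing any (or no relevant) first move here needs a small amount of care, so the cleaner route is to note that $G'$ is a subgraph of $G-v$ and directly invoke the induction hypothesis to get a win on $(G',\seq{r'})$ in $t-1\le t$ rounds without Destroyer moving at all — but formally Destroyer must move, so instead we observe $G'-v'$ (where $v'$ is the smallest vertex of $G'$) is still a subgraph of $G-v$ and apply induction to it. If $v\in V(G')$, then $v$ is the smallest vertex of $G'$ as well (the ordering on $G'$ is induced from $G$), Destroyer plays \textbf{Delete} on $(G',\seq{r'})$ reaching $(G'-v,\tail(\seq{r'}))$; since $G'-v$ is a subgraph of $G-v$ and $\tail(\seq{r})$ dominates $\tail(\seq{r'})$, the induction hypothesis yields a win in $t-1$ rounds, so Destroyer wins on $(G',\seq{r'})$ in $t$ rounds.

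If Destroyer's first action is \textbf{Restrict}, Destroyer selects a layering $\lambda$ of $G$, and for every interval $I$ of at most $\head(\seq{r})$ consecutive integers that Preserver might choose, Destroyer wins on $(G[\lambda^{-1}(I)],\tail(\seq{r}))$ in $t-1$ rounds. On the state $(G',\seq{r'})$, Destroyer plays \textbf{Restrict} with the layering $\lambda'\colonequals\lambda\restriction_{V(G')}$, which is a layering of $G'$ since $G'$ is a subgraph of $G$. Now Preserver picks an interval $I'$ of at most $\head(\seq{r'})$ consecutive integers; since $\seq{r}$ dominates $\seq{r'}$ we have $\head(\seq{r'})\le\head(\seq{r})$, so $I'$ is also a valid choice for Preserver against the layering $\lambda$ in the original game. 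The resulting state is $(G'[\lambda'^{-1}(I')],\tail(\seq{r'}))$, and $G'[\lambda'^{-1}(I')]$ is a subgraph of $G[\lambda^{-1}(I')]$. Applying the induction hypothesis to the state $(G[\lambda^{-1}(I')],\tail(\seq{r}))$ — on which Destroyer wins in $t-1$ rounds — with the subgraph $G'[\lambda'^{-1}(I')]$ and the dominated sequence $\tail(\seq{r'})$, we conclude Destroyer wins on $(G'[\lambda'^{-1}(I')],\tail(\seq{r'}))$ in $t-1$ rounds. As this holds for every interval $I'$ Preserver may pick, Destroyer wins on $(G',\seq{r'})$ in $t$ rounds.

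The argument is essentially bookkeeping, and there is no genuine obstacle; the one point requiring attention is the \textbf{Delete} case when $v\notin V(G')$, where Destroyer's forced move on $(G',\seq{r'})$ deletes a different vertex $v'$ than the one deleted in the reference game, so one must check that $G'-v'$ is still a subgraph of $G-v$ (true, since $G'-v'\subseteq G'\subseteq G-v$ as $v\notin V(G')$) before invoking the induction hypothesis. Everything else follows from the trivial facts that an induced-subordering of a subgraph behaves as expected, that restricting a layering to a subgraph yields a layering, and that domination of sequences is inherited by tails and heads.
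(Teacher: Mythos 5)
Your proof is correct and follows essentially the same induction on $t$ as the paper, including the same treatment of the two cases and the key observation that in the Delete case the smallest vertex of $G'$ may differ from that of $G$ but the resulting graph is still a subgraph of $G-v$. The only difference is cosmetic: the paper phrases the base case as ``$V(G')=\emptyset$'' rather than ``$t=0$'', which are interchangeable here.
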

\begin{proof}
We prove the claim by induction on $t$.  If $V(G')=\emptyset$, then Destroyer wins on the state $(G',\seq{r'})$ in $0\le t$ rounds.
Hence, suppose that $V(G')\neq\emptyset$, and thus also $V(G)\neq \emptyset$.
Since Destroyer wins on the state $(G,\seq{r})$ in $t$ rounds, we have $t\ge 1$.
We consider the first action of Destroyer on $(G,\seq{r})$.

If he takes the Delete action, then on $(G',\seq{r'})$ we also take the delete action.  Let $v$ and $w$ be the smallest
vertices of $G$ and $G'$, respectively.  Note that if $v\in V(G')$, then $v=w$; consequently, we have $G'-w\subseteq G-v$.
By the induction hypothesis Destroyer wins on $(G'-w,\tail(\seq{r'}))$ in $t-1$ rounds, and thus he wins on $(G',\seq{r'})$ in $t$ rounds.

If Destroyer selects a layering $\lambda$ of $G$, we select a layering $\lambda'=\lambda\restriction V(G')$ of $G'$.
The Preserver answers by choosing an interval $I$ of at most $\head(\seq{r'})\le \head(\seq{r})$ consecutive integers.
Destroyer wins on the state $(G[\lambda^{-1}(I)],\tail(\seq{r}))$ in $t-1$ rounds, and by the induction hypothesis,
he also wins on the state $(G'[\lambda^{-1}(I)],\tail(\seq{r'}))$ in $t-1$ rounds.  Consequently, Destroyer wins on
$(G',\seq{r'})$ in $t$ rounds.
\end{proof}

We say that $\CC$ is a \emph{Baker class} if $\CC$ is a class of ordered graphs and for every sequence $\seq{r}$ there exists an integer
$t$ such that for each $G\in\CC$, Destroyer wins the Baker game on $(G,\seq{r})$ in $t$ rounds.

Let us now prove an important composition result.
A \emph{partition} $\PP$ of an ordered graph $G$ (with linear ordering $\prec$ of its vertices) is a sequence $P_1$, \ldots, $P_m$
of pairwise disjoint subsets of $V(G)$ such that $V(G)=P_1\cup\ldots\cup P_m$ and all vertices $u\in P_i$ and $v\in P_j$ such that $i<j$
satisfy $u\prec v$.  
For an integer $d\ge 1$, the partition $\PP$ is \emph{width-$d$ geodesic} if for $i\in\{1,\ldots,m\}$, $G[P_i]$ has
a $G[P_i\cup P_{i+1}\cup \ldots\cup P_m]$-geodesic layering of width at most $d$.
Let $G/\PP$ denote the ordered graph obtained from $G$ by identifying all vertices in each part $P$ of $\PP$ to a single vertex and
suppressing the arising loops and parallel edges, with the ordering of the vertices of $G/\PP$ matching
the sequence of parts of $\PP$.
For a class $\CC$ and an integer $d\ge 1$, let $\CC^{(d)}$ denote the class of ordered graphs $G$
for which there exists a width-$d$ geodesic partition $\PP$ of $G$ such that $G/\PP\in\CC$.

\begin{lemma}\label{lemma-compose}
If $\CC$ is Baker class, then $\CC^{(d)}$ is a Baker class for every integer $d\ge 1$.
\end{lemma}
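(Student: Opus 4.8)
The plan is to show that Destroyer can win the Baker game on any $G\in\CC^{(d)}$ by simulating a winning strategy in the Baker game on $G/\PP$, where $\PP=P_1,\ldots,P_m$ is the width-$d$ geodesic partition witnessing $G\in\CC^{(d)}$. Fix a sequence $\seq{r}$; the number of rounds Destroyer needs on $G$ will be controlled by the number of rounds $t$ he needs to win on $G/\PP$ with respect to a suitably chosen sequence $\seq{r}'$ (depending on $\seq{r}$ and $d$), so that the bound is uniform over $\CC^{(d)}$ once it is uniform over $\CC$. The key point is that a single Restrict move in the quotient game, which selects $\head(\seq{r}')$ consecutive ``super-layers'' (i.e.\ parts of $\PP$), corresponds in $G$ to restricting to the union of those parts; and such a union, being a contiguous block $P_i\cup P_{i+1}\cup\cdots\cup P_j$ of the geodesic partition, can itself be cleared by Destroyer in a bounded number of moves using the geodesic layerings of width $d$ guaranteed by the definition of width-$d$ geodesic partition. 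Concretely: if $G/\PP$ is reduced to the empty graph after the quotient-game play deletes or restricts away all parts, then in $G$ we have, in the end, to destroy at most $\head(\seq{r}')$ consecutive parts; we handle them one geodesic layering at a time. This is where the ``errors accumulate'' concern from the introduction is addressed — the number of layers we may select is allowed to grow along the sequence $\seq{r}$, which is exactly why the Delete action consumes an element of $\seq{r}$ (as the remark before Lemma~\ref{lemma-sg} flags, and as the remark after the rules warns is needed precisely here).

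First I would set up the simulation carefully. Destroyer, playing on $(G,\seq{r})$, maintains a partial play of the quotient game on $(G/\PP,\seq{r}')$ together with an invariant relating the current graph in the real game to a subgraph of $G$ induced by a contiguous block of parts $P_a\cup\cdots\cup P_b$, matching the current vertex set of the quotient game. When the quotient strategy calls for Delete of the smallest quotient vertex (i.e.\ the part $P_a$), Destroyer uses the $G[P_a\cup\cdots\cup P_b]$-geodesic layering $\lambda$ of $G[P_a]$ of width $\le d$: by Lemma~\ref{lemma-exlay} it extends to a layering of $G[P_a\cup\cdots\cup P_b]$, and Destroyer plays a Restrict with this layering. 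Whatever interval Preserver picks, it meets $P_a$ in a set of at most $d\cdot\head(\text{current tail})$ vertices, which Destroyer then removes by at most that many Delete moves, after which $P_a$ is gone and the quotient game has advanced by one Delete. When the quotient strategy calls for Restrict with some layering $\mu$ of the current quotient graph, Destroyer lifts $\mu$ to the layering of $G[P_a\cup\cdots\cup P_b]$ assigning to each vertex the $\mu$-value of its part (this is a layering because identified vertices are contracted and parts are ordered), plays Restrict with it, and whatever interval Preserver chooses corresponds to at most $\head$ consecutive parts — recording this as Preserver's response in the quotient game.

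Next I would do the round-count bookkeeping. Each quotient Delete costs Destroyer one Restrict plus up to $d\cdot(\text{allowed width at that stage})$ Deletes in the real game; each quotient Restrict costs one Restrict in the real game; and at the very end one is left with a block of $\le\head(\seq{r}')$ parts that must still be cleared one geodesic layering at a time, each again costing one Restrict and up to $d\cdot(\text{width})$ Deletes. Because each real move consumes one term of $\seq{r}$, the ``width allowed'' at stage $j$ is $r_j$; so I need to pick $\seq{r}'$ so that $\head(\seq{r}')$ and the relevant tail-values of $\seq{r}'$ are small enough that, after interleaving at most $d\cdot r_j+1$ real moves per quotient move, the remaining tail of $\seq{r}$ still dominates the tail of $\seq{r}'$ — i.e.\ I define $\seq{r}'$ by a ``slowdown'' of $\seq{r}$: roughly $r'_i$ equals $r_{g(i)}$ for a function $g$ growing fast enough to absorb the $O(d\cdot r)$ moves inserted between consecutive quotient moves, using monotonicity via Lemma~\ref{lemma-sg} to always pass to a dominated tail. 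If Destroyer wins the quotient game on $(G/\PP,\seq{r}')$ in $t$ rounds — and $t$ is uniform over $\CC$ since $\CC$ is Baker — then Destroyer wins on $(G,\seq{r})$ in a number of rounds bounded by a function of $t$, $d$, and finitely many terms of $\seq{r}$, hence uniform over $\CC^{(d)}$; this is exactly the Baker-class condition for $\CC^{(d)}$.

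The main obstacle is the bookkeeping in the previous paragraph: making the slowdown $\seq{r}'\mapsto\seq{r}$ precise so that the number of real moves inserted per quotient move, namely $1+d\cdot r_j$ where $r_j$ is the current real allowance, stays compatible with the domination needed to keep invoking Lemma~\ref{lemma-sg}, and simultaneously keeping the final round count a function of $t$, $d$, and $\seq{r}$ only (not of $G$). One delicate spot is that the allowance $r_j$ used inside a quotient-Delete simulation depends on how many moves have already been played, so the definition of $\seq{r}'$ must be set up by a single global recursion up front; another is the terminal phase, where the last $\le\head(\seq{r}')$ parts are cleared — this is really the same Delete-simulation argument applied $\head(\seq{r}')$ times, and I would factor it out as a sub-claim: \emph{for every $d$ and every $m$, there is a function bounding the number of rounds to clear a width-$d$ geodesic partition into $\le m$ parts}, proved by induction on $m$ using Lemma~\ref{lemma-exlay}. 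With that sub-claim in hand, the composition follows by the simulation above.
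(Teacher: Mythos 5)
Your proposal is correct and follows essentially the same route as the paper's proof: simulate a winning strategy on the quotient $G/\PP$ with respect to a slowed-down sequence, lift each quotient Restrict to the layering of $G$ that is constant on parts, simulate each quotient Delete by one Restrict with the geodesic layering of the first part extended via Lemma~\ref{lemma-exlay} followed by at most $d\cdot(\text{current allowance})$ Deletes (which remove exactly that part because the partition is ordered), and set up the slowdown by a single global recursion ($i_j=i_{j-1}+dr_{i_{j-1}+1}+1$ in the paper), padding with extra Deletes and using the subgraph clause of Lemma~\ref{lemma-sg} to keep the two games' indices aligned. The one superfluous element is your terminal phase: since the real graph is always contained in the union of the parts still present as vertices of the quotient game, it becomes empty exactly when the quotient game ends, so there is never a leftover block of $\le\head(\seq{r}')$ parts to clear.
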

\begin{proof}
By Lemma~\ref{lemma-sg}, we can without loss of generality assume that $\CC$ is subgraph-closed.
For any sequence $\seq{r}$, let $t(\seq{r})$ be an integer such that for every $H\in\CC$, Destroyer wins
the Baker game on $(H,\seq{r})$ in $t(\seq{r})$ rounds.

Consider any sequence $\seq{r}=r_1,r_2,\ldots$.  Let $i_0=0$, and for $j\ge 1$, let
$i_j=i_{j-1}+dr_{i_{j-1}+1}+1$.
Let $\seq{r'}=r_{i_0+1},r_{i_1+1},\ldots$ and let $m=t(\seq{r'})$.
We claim that for every ordered graph $G\in \CC^{(d)}$, Destroyer wins the Baker game on $(G,\seq{r})$ in $i_m$ rounds.
More generally, we will prove the following claim for $n=0,\ldots, m$ by induction.  Suppose that $H\in\CC$ is an ordered graph
such that Destroyer wins the Baker game on $(H,\tail_{m-n}(\seq{r'}))$ in $n$ rounds.  If $G\in\CC^{(d)}$
has a width-$d$ geodesic partition $\PP$ such that $H=G/\PP$, then Destroyer wins the Baker game
on $(G,\tail_{i_{m-n}}(\seq{r}))$ in $i_m-i_{m-n}$ rounds.

We identify the vertex set of $H$ with $\PP$ in the natural way.
Note that $$\head(\tail_{m-n}(\seq{r'}))=r_{i_{m-n}+1}=\head(\tail_{i_{m-n}}(\seq{r})).$$
The claim is trivial for $n=0$, and thus we can assume $n\ge 1$.  We mimic the Destroyer's strategy for $(H,\tail_{m-n}(\seq{r'}))$ as follows.

If Destroyer performs the Restrict action with layering $\lambda$,
we let $\lambda_\star$ denote the layering of $G$
such that $\lambda_\star(v)=\lambda(P)$ for every $P\in V(H)$ and $v\in P$.
In the Baker game on $(G,\tail_{i_{m-n}}(\seq{r}))$, we perform the Restrict action with the layering $\lambda_\star$,
Preserver chooses an interval $I$
of at most $r_{i_{m-n}+1}$ consecutive integers and changes the state to
$(G_\star,\tail_{i_{m-n}+1}(\seq{r}))$, where $G_\star=G[\lambda^{-1}_\star(I)]$.
We follow up with $i_{m-n+1}-i_{m-n}-1$ Delete actions, either ending the game in the process
or changing the state to $(G'_\star,\tail_{i_{m-n+1}}(\seq{r}))$ for a subgraph $G'_\star$ of $G_\star$.
In the Baker game on $(H,\tail_{m-n}(\seq{r'}))$, we have Preserver also answer with $I$, resulting in the state
$(H',\tail_{m-n+1}(\seq{r'}))$, where $H'=H[\lambda^{-1}(I)]$.
Considering the partition $\PP_\star=V(H')\subseteq\PP$, observe that $H'=G_\star/\PP_\star$ and that $\PP_\star$ is a width-$d$ geodesic partition
of $G_\star$.  Since Destroyer wins the Baker game on $(H',\tail_{m-n+1}(\seq{r}))$ in $n-1$ rounds,
by the induction hypothesis and Lemma~\ref{lemma-sg} Destroyer wins the Baker game on $(G'_\star,\tail_{i_{m-n+1}}(\seq{r}))$ in
$i_m-i_{m-n+1}$ rounds, and thus Destroyer wins the Baker game on
$(G,\tail_{i_{m-n}}(\seq{r}))$ in $1+(i_{m-n+1}-i_{m-n}-1)+(i_m-i_{m-n+1})=i_m-i_{m-n}$ rounds.

Next, suppose Destroyer performs the Delete action in the Baker game on $(H,\tail_{m-n}(\seq{r'}))$,
changing the state to $(H-P,\tail_{m-n+1}(\seq{r'})$, where $P\in\PP$ is the smallest vertex of $H$.
We select a $G$-geodesic layering $\lambda_P$ of $G[P]$ of width at most $d$.  By Lemma~\ref{lemma-exlay},
$\lambda_P$ extends to a layering $\lambda$ of $G$.  We perform the Restrict action on $(G,\tail_{i_{m-n}}(\seq{r}))$
with this layering $\lambda$.  Preserver answers with an interval $I$ of at most $r_{i_{m-n}+1}$ consecutive integers, changing the state
to $(G[\lambda^{-1}(I)],\tail_{i_{m-n}+1}(\seq{r}))$.  Note that $|\lambda^{-1}(I)\cap P|\le dr_{i_{m-n}+1}$.
Next, we perform $dr_{i_{m-n}+1}$ Delete actions, either ending the game in the process or
changing the state to $(G',\tail_{i_{m-n+1}}(\seq{r}))$, where
$G'\subseteq G[\lambda^{-1}(I)\setminus P]\subseteq G[V(G)\setminus P]$.
Note that $H-P=G[V(G)\setminus P]/(\PP\setminus \{P\})$,
and $\PP\setminus \{P\}$ is a width-$d$ geodesic partition of $G[V(G)\setminus P]$.  Since Destroyer wins the Baker game on
$(H-P,\tail_{m-n+1}(\seq{r'}))$ in $n-1$ rounds, the induction hypothesis and Lemma~\ref{lemma-sg} implies that
Destroyer wins the Baker game on $(G',\tail_{i_{m-n+1}}(\seq{r}))$ in $i_m-i_{m-n+1}$ round, and
thus he also wins the Baker game on $(G,\tail_{i_{m-n}}(\seq{r}))$ in
$1+dr_{i_{m-n}+1}+(i_m-i_{m-n+1})=i_m-i_{m-n}$ rounds.
\end{proof}

We will also need another composition result based on clique-sums.  Suppose $\CC_1$ and $\CC_2$ are classes of ordered graphs.
Let $\CC_1\oplus\CC_2$ denote the class of ordered graphs $G$ such that there exists a set $B\subseteq V(G)$ satisfying the following
conditions:
\begin{itemize}
\item $G[B]\in \CC_1$, and
\item for each component $C$ of $G-B$ we have $G[C]\in\CC_2$, the neighbors of vertices of $C$ in $B$
induce a clique $K_C$, and all vertices of $K_C$ are smaller than all vertices of $C$.
\end{itemize}
In this situation, we say that $B$ is the \emph{base} of $G$.

\begin{lemma}\label{lemma-csum}
If $\CC_1$ and $\CC_2$ are Baker classes, then $\CC_1\oplus \CC_2$ is a Baker class.
\end{lemma}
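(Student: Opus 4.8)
The plan is to prove that $\CC_1\oplus\CC_2$ is Baker by combining a winning strategy for the Baker game on the base $G[B]$ (using that $\CC_1$ is Baker) with winning strategies on the components of $G-B$ (using that $\CC_2$ is Baker), much as in the proof of Lemma~\ref{lemma-compose}. By Lemma~\ref{lemma-sg} we may assume both $\CC_1$ and $\CC_2$ are subgraph-closed. Fix a sequence $\seq{r}$; let $t_1$ be the number of rounds in which Destroyer wins on any graph of $\CC_1$ with the (possibly shifted) relevant sequence, and similarly $t_2$ for $\CC_2$. The key structural observation is that a layering of $G[B]$ extends to a layering of the whole of $G$: if $\mu$ is a layering of $G[B]$, define $\mu$ on each component $C$ of $G-B$ by giving every vertex of $C$ the common value of $\mu$ on the clique $K_C$ (well-defined since $K_C$ induces a clique, hence is contained in a single layer), so all the $K_C$-to-$C$ edges and all edges inside $C$ stay within one layer. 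So a Restrict move that Destroyer wants to play on $G[B]$ can be mirrored as a Restrict move on $G$ with the extended layering, and after Preserver picks an interval $I$, the base of the resulting graph $G[\mu^{-1}(I)]$ is $B\cap\mu^{-1}(I)$, which lies in $\CC_1$ (subgraph-closed), and its components are induced subgraphs of the original components, hence in $\CC_2$ — so $G[\mu^{-1}(I)]\in\CC_1\oplus\CC_2$ with a smaller base.

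The strategy then runs in two phases. In \emph{Phase 1}, Destroyer plays out his winning strategy for the Baker game on the base $G[B]\in\CC_1$: each Restrict move on $G[B]$ is mirrored on $G$ via the extended layering as above, and each Delete move on $G[B]$ — which deletes the smallest vertex $b$ of the current base — is mirrored by a single Delete move on $G$ (the smallest vertex of $G$ is in the base, since base vertices precede all non-base vertices in every component, and the only non-base vertices are in components whose clique $K_C$ is nonempty and precedes them — one needs to check that the global minimum is always a base vertex, which follows from the ordering condition together with the fact that a component's clique is a subset of the base). After at most $t_1$ such rounds, the base has been exhausted, and the remaining graph is a disjoint union of induced subgraphs of the original components, i.e.\ a disjoint union of graphs in $\CC_2$. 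In \emph{Phase 2}, Destroyer handles this disjoint union; since $\CC_2$ is Baker and (by Lemma~\ref{lemma-sg}) subgraph-closed, a disjoint union of $\CC_2$-graphs is itself handled in a bounded number of rounds — Destroyer simply processes the components in order of their vertices, running the $\CC_2$-strategy on the current smallest component, one at a time; a Restrict move on one component is realized on the whole union by a componentwise-type layering that isolates that component (e.g.\ the layering that is the component's chosen layering on it and a far-away constant elsewhere), and Delete moves match directly. This costs at most (number of components) $\times t_2$ rounds in the worst case, which is not bounded in general — so this naive Phase 2 does not immediately work.

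To fix Phase 2, the right idea is to interleave: rather than finishing the base entirely and then attacking components one by one, observe that after the first Restrict move of Phase 1 the components only shrink, and that one should bound the number of rounds using a sequence-shifting bookkeeping exactly as in Lemma~\ref{lemma-compose}, where the "budget" sequence $\seq{r}$ is carved into $t_1+1$ blocks, the $j$th block long enough to accommodate one move of the base-strategy plus a full run of a bounded-round sub-game. Concretely, I would define block boundaries $i_0=0$ and $i_{j}=i_{j-1}+ (\text{rounds needed to clear one component family with tail } \tail_{i_{j-1}}(\seq{r})) + 1$, run Phase 1 over the subsequence $r_{i_0+1}, r_{i_1+1},\ldots$, and between consecutive base-moves clear the components that have become disconnected from the (already processed portion of the) base — but the cleanest formulation treats the components uniformly by noting that $G[B]\in\CC_1$ together with the component structure means $G$ itself, after contracting each component to the single vertex it attaches through, is a graph on which a $\CC_1$-strategy works, reducing to a variant of Lemma~\ref{lemma-compose}. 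I expect the main obstacle to be precisely this round-counting: ensuring the number of rounds stays bounded by a function of $\seq{r}$ alone (independent of the number and sizes of components), which forces the block/tail bookkeeping rather than the naive two-phase split; the structural claims (layering extension across clique-sums, base of the restricted graph, smallest vertex is a base vertex, subgraph-closure of $\CC_1\oplus\CC_2$) are routine once stated carefully.
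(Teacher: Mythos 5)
Your structural observations (layering extension across clique-sums, base of a restricted graph, subgraph-closure) are fine, and you correctly identify that the naive two-phase scheme fails because the number of components --- and hence the length of ``Phase 2'' --- is unbounded. But the fix you then gesture at does not close the gap, and it misses the key move of the paper's proof.

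The missing idea is this: \emph{at every round}, Destroyer first performs a Restrict action with a spread componentwise layering (one constant value per component, spread far enough apart to respect the current budget). This forces Preserver to select a \emph{single connected component}, so the game is always played on a connected graph. On a connected $G'$ with nonempty $B'$, every component of $G'-B'$ attaches through a nonempty clique $K_C\subseteq B'$, so the smallest vertex of $G'$ really is a base vertex (note: your Phase~1 has a latent bug here too --- if $G$ itself is disconnected, some component may have $K_C=\emptyset$ and could contain the globally smallest vertex, so Delete-the-smallest would hit a non-base vertex). And, crucially, once the base is exhausted, the remaining graph is a \emph{single} $\CC_2$-graph, not a disjoint union, so only one run of the $\CC_2$-strategy is needed, giving the bound $2t_1+t_2+1$. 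The extra round per iteration is accounted for by running the $\CC_1$-strategy over the subsequence $r_2,r_4,r_6,\ldots$ rather than $\seq{r}$ itself.

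Your suggested ``cleanest formulation'' --- contracting each component to a single vertex and invoking Lemma~\ref{lemma-compose} --- does not work: components attach through a clique, not one vertex, and Lemma~\ref{lemma-compose} requires each part to carry a geodesic layering of bounded width, which arbitrary $\CC_2$-components do not have. The block/tail bookkeeping you reach for is needed, but it is a consequence of interleaving a componentwise Restrict before each base move, not a substitute for that idea. As written, the proof has a genuine gap at the step you flagged.
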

\begin{proof}
Consider a sequence $\seq{r}=r_1,r_2,\ldots$, and let $\seq{r'}=r_2,r_4,r_6,\ldots$.  Let $t_1$ be an integer such that
for every $H\in\CC_1$, Destroyer wins the Baker game on $(H,\seq{r'})$ in $t_1$ rounds.
Let $t_2$ be an integer such that for every $H\in\CC_2$, Destroyer wins the Baker game on
$(H,\tail_{2t_1+1}(\seq{r}))$ in $t_2$ rounds.
By Lemma~\ref{lemma-sg}, we can without loss of generality assume that $\CC_1$ and $\CC_2$ are subgraph-closed.

We claim that for every ordered graph $G\in \CC_1\oplus \CC_2$, Destroyer wins the Baker game on $(G,\seq{r})$ in $2t_1+t_2+1$ rounds.
More generally, we will prove the following claim for $n=0,\ldots, t_1$ by induction.  Consider a graph $G\in\CC_1\oplus \CC_2$ with base $B$.
If Destroyer wins the Baker game on $(G[B],\tail_{t_1-n}(\seq{r'}))$ in $n$ rounds, then he also wins the Baker game
on $(G,\tail_{2(t_1-n)}(\seq{r}))$ in $2n+t_2+1$ rounds.

Destroyer first performs the Restrict action with the $r_{2(t_1-n)+1}$-spread componentwise layering.
Preserver's response then changes the state to $(G',\tail_{2(t_1-n)+1}(\seq{r}))$, where either $V(G')=\emptyset$ or $G'$ is a component of $G$;
we can assume the latter.  Since $\CC_1$ and $\CC_2$ are subgraph-closed, $B'=B\cap V(G')$ is a base of $G'$.

If $B'=\emptyset$, then since $B'$ is a base of $G'$ and $G'$ is connected, we conclude that $G'\in\CC_2$.
Destroyer performs (at most) $2n$ Delete actions, resulting either in a victory or a state $(G'',\tail_{2t_1+1}(\seq{r}))$
with $G''\in \CC_2$.  In the latter case, Destroyer wins the Baker game on $(G'',\tail_{2t_1+1}(\seq{r}))$ in $t_2$ rounds.
Hence, Destroyer wins the Baker game on $(G,\tail_{2(t_1-n)}(\seq{r}))$ in $2n+t_2+1$ rounds, as required.
Hence, we can assume $B'\neq\emptyset$.  In particular, we have $n\ge 1$, since when $n=0$,
Destroyer wins the Baker game on $(G[B],\tail_{t_1-n}(\seq{r'}))$ in $0$ rounds, and thus $B=\emptyset$,

Since $G'[B']=G[B']\subseteq G[B]$, Lemma~\ref{lemma-sg} implies that Destroyer wins the Baker game on
$(G'[B'],\tail_{t_1-n}(\seq{r'}))$ in $n$ rounds.  Note that $\head(\tail_{t_1-n}(\seq{r'}))=\head(\tail_{2(t_1-n)+1}(\seq{r}))=r_{2(t_1-n)+2}$,
and consider the first action of Destroyer on this state.

If the action is Restrict with layering $\lambda_\star$, then let $\lambda:V(G')\to\mathbb{Z}$ be defined as follows.
Let $\lambda(v)=\lambda_\star(v)$ for $v\in B'$.  For every component $C$ of $G'-B'$, choose $z\in B'$ with a neighbor in $C$
(which exists since $G'$ is connected) arbitrarily and let $\lambda(v)=\lambda_\star(z)$ for every $v\in C$.
Note that $\lambda$ is a layering of $G'$: if $z'\in B'\setminus\{z\}$ has a neighbor $v\in C$, then since the vertices with a neighbor in $C$
form a clique, we have $zz'\in E(G')$ and $|\lambda(z')-\lambda(v)|=|\lambda_\star(z')-\lambda_\star(z)|\le 1$.
On the state $(G',\tail_{2(t_1-n)+1}(\seq{r}))$, we preform the Restrict action with this layering $\lambda$ and Preserver
answers with an interval $I$ of at most $r_{2(t_1-n)+2}$ consecutive integers, resulting in the state
$(G'[\lambda^{-1}(I)],\tail_{2(t_1-n+1)}(\seq{r}))$.  Destroyer wins the Baker game
on $(G'[\lambda_\star^{-1}(I)\cap B'],\tail_{t_1-n+1}(\seq{r'}))$ in $n-1$ rounds, and $\lambda_\star^{-1}(I)\cap B'$ is a base
of $G'[\lambda^{-1}(I)]$.  By the induction hypothesis, Destroyer wins the Baker game on $(G'[\lambda^{-1}(I)],\tail_{2(t_1-n+1)}(\seq{r}))$
in $2(n-1)+t_2+1$ rounds, and thus he also wins on $(G,\tail_{2(t_1-n)}(\seq{r}))$ in $2n+t_2+1$ rounds.

Suppose now the action is Delete, hence changing the state to $(G'[B'\setminus \{v\}],\tail_{t_1-n+1}(\seq{r'}))$, where $v$ is the
smallest vertex of $B'$; Destroyer
wins the Baker game from this state in $n-1$ rounds. Note that $v$ is also the smallest vertex of $G'$, since $B'$ is a base of $G'$ and
$G'$ is connected.  Hence, the Delete action applied to $(G',\tail_{2(t_1-n)+1}(\seq{r}))$ deletes the same vertex, resulting in the state
$(G'-v,\tail_{2(t_1-n+1)}(\seq{r}))$.  Observe that $B'\setminus \{v\}$ is a base of $G'-v$. By the induction hypothesis,
Destroyer wins the baker game on $(G'-v,\tail_{2(t_1-n+1)}(\seq{r}))$ in $2(n-1)+t_2+1$ rounds, and thus he also wins on $(G,\tail_{2(t_1-n)}(\seq{r}))$ in $2n+t_2+1$ rounds.
\end{proof}

\subsection{Bounded treewidth}

An ordered graph $G$ is \emph{chordal} if for every vertex $v\in V(G)$, the neighbors of $v$ smaller than $v$ induce a clique in $G$.
The \emph{left-degree} of $v$ is the number of such neighbors.  Note that vertices of a graph $H$ can be linearly ordered so that
the resulting ordered graph is chordal if and only if $H$ is chordal, i.e., contains no induced cycles of length at least $4$.
Furthermore, recall that graph $H'$ has treewidth at most $d$ if and only if $H'$ has a chordal supergraph $H$ with clique number
at most $d+1$, and thus the corresponding chordal ordered graph has maximum left-degree at most $d$.

Let us start with a simple observation.
\begin{lemma}\label{lemma-mop}
Let $G$ be a chordal ordered graph (with linear ordering $\prec$ of its vertices).
If $P=v_1v_2\ldots v_k$ is an induced path in $G$ from $v_1$ to $v_k$ and $v_1$
is smaller than all other vertices of $P$, then $v_1\prec \ldots\prec v_k$.
\end{lemma}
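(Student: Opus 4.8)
The plan is to induct on the length $k$ of the path $P$. The cases $k \le 2$ are immediate: for $k = 1$ there is nothing to prove, and for $k = 2$ the hypothesis that $v_1$ is smaller than $v_2$ is exactly the conclusion $v_1 \prec v_2$.

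For the inductive step with $k \ge 3$, I would look at the $\prec$-largest vertex of $P$, say $v_j$. Since $v_1$ is the $\prec$-smallest of the vertices $v_1, \ldots, v_k$ and $k \ge 2$, we have $j \ge 2$. The key step is this: if in addition $j \le k-1$, then $v_{j-1}$ and $v_{j+1}$ are two distinct neighbors of $v_j$ on the path, and both are $\prec$-smaller than $v_j$ by maximality of $v_j$; since $G$ is chordal, the neighbors of $v_j$ that are $\prec$-smaller than $v_j$ induce a clique, so $v_{j-1}v_{j+1} \in E(G)$. But $|(j-1)-(j+1)| = 2$, so this contradicts the assumption that $P$ is an induced path. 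Hence $j = k$, that is, $v_k$ is the $\prec$-largest vertex of $P$; in particular $v_{k-1} \prec v_k$.

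It then remains to apply the induction hypothesis to the subpath $P' = v_1 v_2 \ldots v_{k-1}$: it is again an induced path in $G$, and $v_1$ is still smaller than all of its other vertices, so $v_1 \prec v_2 \prec \ldots \prec v_{k-1}$; combining with $v_{k-1} \prec v_k$ yields $v_1 \prec v_2 \prec \ldots \prec v_k$. The only point requiring any care is the chordality step, and even there the argument is short — one just needs the two path-neighbors of the $\prec$-maximum vertex to be genuinely distinct (which is why we take $j \in \{2, \ldots, k-1\}$, requiring $k \ge 3$) and non-adjacent (which is exactly where "induced" is used). I do not anticipate a real obstacle.
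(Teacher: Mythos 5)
Your proof is correct and rests on the same key idea as the paper's: an interior vertex of the path that is a local $\prec$-maximum would, by chordality, force a chord between its two path-neighbors, contradicting that $P$ is induced. The paper reaches the contradiction directly by taking the smallest index $s$ with $v_{s+1}\prec v_s$ (the first ``descent''), whereas you take the global $\prec$-maximum and wrap the argument in an induction on the path length; the inductive packaging is slightly longer but the substance is the same.
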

\begin{proof}
Suppose for a contradiction there exists $s\in\{2,\ldots, k-1\}$ such that $v_{s+1}\prec v_s$, and choose smallest such index $s$.
By the minimality of $s$, we have $v_{s-1}\prec v_s$.  Since $G$ is chordal, $v_{s-1}$ and $v_{s-1}$ are adjacent,
contradicting the assumption that $P$ is an induced path.
\end{proof}

For a non-negative integer $r$ and a vertex $v$ of a graph $G$, let $N_r[v]$ denote the set of vertices of $G$ at distance at most $r$ from $v$.

\begin{lemma}\label{lemma-comp}
Let $G$ be a chordal ordered graph, let $v$ be the smallest vertex of $G$,
let $r$ be a non-negative integer, let $C$ be the vertex set of a component of $G-N_r[v]$,
and let $K$ be the set of vertices in $N_r[v]$ which have a neighbor in $C$.
Then $K$ induces a clique in $G$, and all vertices in $K$ are smaller than all vertices in $C$.
\end{lemma}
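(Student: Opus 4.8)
The plan is to introduce the distance classes $L_i=\{u\in V(G):d_G(v,u)=i\}$, so that $N_r[v]=L_0\cup\dots\cup L_r$, and first to reduce to the case where $G$ is connected and $r\ge 1$: if $r=0$ then $K\subseteq N_0[v]=\{v\}$, so both conclusions are immediate, and if $C$ is disjoint from the component of $G$ containing $v$ then no vertex of $N_r[v]$ is adjacent to $C$, so $K=\emptyset$; otherwise we may restrict $G$ to the component of $v$. The first real step is the observation that $K\subseteq L_r$ while every vertex of $C$ is at distance at least $r+1$ from $v$: if $w\in K$ has a neighbour $c\in C$, then $r+1\le d_G(v,c)\le d_G(v,w)+1\le r+1$, forcing $d_G(v,w)=r$ and $d_G(v,c)=r+1$. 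I will also use that $N_G(C)\setminus C=K$ (a vertex outside $C$ adjacent to $C$ must lie in $N_r[v]$, since $C$ is a component of $G-N_r[v]$), and that every $w\in K$ has a neighbour $w'$ with $d_G(v,w')=r-1$, namely its predecessor on a shortest path from $v$ to $w$.

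To prove that every vertex of $K$ is smaller than every vertex of $C$, I would fix $w\in K$ and $c\in C$ and construct an induced path of $G$ that starts at $v$ and visits $w$ before $c$, so that Lemma~\ref{lemma-mop} applies. Let $P_0$ be a shortest path from $v$ to $w$ (hence induced), and let $Q=w,c_1,\dots,c_\ell=c$ be a shortest path from $w$ to $c$ in $G[\{w\}\cup C]$; then $Q$ is induced, its internal vertices lie in $C$, and $w$ is adjacent to none of $c_2,\dots,c_\ell$. Since $P_0$ and $Q$ are themselves induced, the only candidate chords of the concatenation $P_0\cdot Q$ are edges from a vertex $u\ne w$ of $P_0$ (so $d_G(v,u)\le r-1$) to some $c_j$, which would give $d_G(v,c_j)\le r$, contradicting the first step; hence $P_0\cdot Q$ is an induced path. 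As $v$ is the smallest vertex of $G$, Lemma~\ref{lemma-mop} shows the vertices increase along $P_0\cdot Q$, and in particular $w\prec c$.

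To prove that $K$ is a clique, let $D$ be the vertex set of the component of $G-K$ containing $v$; this is well defined since $v\notin K\cup C$, and since $N_G(C)\setminus C=K$ the set $C$ is itself a component of $G-K$, necessarily different from $D$, so there is no edge of $G$ between $C$ and $D$. Moreover every $w\in K$ has a neighbour in $D$: the vertex $w'$ from the first step lies in $L_{r-1}$, hence in neither $K$ nor $C$, and is joined to $v$ by a shortest path contained in $L_0\cup\dots\cup L_{r-1}$, which is disjoint from $K\subseteq L_r$; thus $w'\in D$. Now take distinct $x,y\in K$. Each has a neighbour in $C$ and a neighbour in $D$, so a shortest path $P_C$ from $x$ to $y$ in $G[\{x,y\}\cup C]$ and a shortest path $P_D$ from $x$ to $y$ in $G[\{x,y\}\cup D]$ are induced paths whose internal vertices lie in $C$ and in $D$ respectively, and these paths meet only in $\{x,y\}$, so $P_C\cup P_D$ is a cycle $Z$. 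If $x\not\sim y$, then $P_C$ and $P_D$ both have length at least $2$, so $Z$ has length at least $4$ and, by chordality, a chord; but a chord cannot lie within the induced path $P_C$ or within $P_D$, cannot be $xy$, and cannot join an internal vertex of $P_C$ to an internal vertex of $P_D$ since there are no edges between $C$ and $D$. This contradiction gives $x\sim y$.

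The step I expect to be the main obstacle is the clique conclusion: in a general chordal graph the neighbourhood of a component of $G-S$ need not be a clique, so one has to exploit that $N_r[v]$ is a distance ball centred at the smallest vertex. The observation $K\subseteq L_r$ is what makes this work — it supplies the descent neighbour $w'\in D$ that effectively turns $K$ into a minimal separator between $C$ and $D$, and, through the edge-distance inequality, it is also what forces the path $P_0\cdot Q$ in the ordering argument to be induced.
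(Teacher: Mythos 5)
Your proof is correct. The clique half is essentially the paper's argument: assume $x,y\in K$ are non-adjacent, join them by two internally disjoint induced paths (one through $C$, one through the "inside"), and contradict chordality with the resulting chordless cycle of length at least $4$; the paper routes the inner path through $G[N_{r-1}[v]\cup\{x,y\}]$ rather than through your component $D$ of $G-K$ containing $v$, but the no-chords verification is the same (an edge from the inside to $C$ would put a vertex of $C$ within distance $r$ of $v$). Your ordering half, however, is genuinely different and cleaner: you build a single induced path $v\to w\to c$ by concatenating a shortest $v$--$w$ path with a shortest $w$--$c$ path inside $G[\{w\}\cup C]$, check inducedness via the distance classes, and apply Lemma~\ref{lemma-mop} once to read off $w\prec c$ directly for every pair. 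The paper instead argues by contradiction with the largest vertex $x$ of $K$ and the smallest vertex $z$ of $C$, applying Lemma~\ref{lemma-mop} to two separate paths (a path from $z$ to $x$ inside $C\cup\{x\}$ and a shortest path from $v$ through $x$ to a neighbour $y\in C$) and deriving $x\succ y$ and $y\succ x$ simultaneously. Your version avoids this double application and the extremal-vertex bookkeeping at the cost of one extra inducedness check on the concatenated path, which your observation $K\subseteq L_r$ (also implicit in the paper's proof) handles; both arguments rest on exactly the same two ingredients, Lemma~\ref{lemma-mop} and the distance-layer structure around the smallest vertex.
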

\begin{proof}
Suppose first that distinct vertices $x,y\in K$ are not adjacent.  Note that $d_G(v,x)=d_G(v,y)=r$, as otherwise $C$ would contain
a vertex at distance at most $r$ from $v$.  Let $P_1$ be the shortest path between $x$ and $y$ in $G[N_{r-1}[v]\cup\{x,y\}]$,
and let $P_2$ be a shortest path between $x$ and $y$ in $G[C\cup\{x,y\}]$.  The concatenation of $P_1$ and $P_2$ is
an induced cycle of length at least $4$, contradicting the assumption that $G$ is chordal.  Hence, $G[K]$ is a clique.

Let $x$ be the largest vertex in $K$, and let $z$ be the smallest vertex in $C$.
Suppose now for a contradiction that $z$ is smaller than $x$.  Let $Q$ be a shortest path between $x$ and $z$ in $G[C\cup \{x\}]$.
Note that $z$ is the smallest vertex of $V(Q)$, and by Lemma~\ref{lemma-mop}, $x$ is the largest vertex of $V(Q)$.
Let $y$ be the neighbor of $x$ in $Q$. Since $y\in C$, we conclude that $d_G(v,y)=r+1$ and there exists a shortest path
$P$ from $v$ to $y$ (of length $r+1$) containing $x$.  By Lemma~\ref{lemma-mop}, $y$ is the largest vertex of $V(P)$.
However, this is a contradiction, since $x$ and $y$ belong to both $P$ and $Q$.
\end{proof}

We are now ready to prove classes of graphs of bounded treewidth
(or more precisely, classes of chordal ordered graphs with bounded maximum left-degree)
are Baker.  For an integer $d$, let $\TT_d$ denote the class of chordal ordered
graphs of maximum left-degree at most $d$.  Let $\TT_{d,1}=\TT_d$, and for any integer $r\ge 2$, let $\TT_{d,r}=\TT_{d,r-1}\oplus\TT_d$.

\begin{lemma}\label{lemma-winchord}
For every integer $d$, $\TT_d$ is a Baker class.
\end{lemma}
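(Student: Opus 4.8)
The plan is to induct on $d$. For $d=0$ the class $\TT_0$ consists of edgeless ordered graphs, and Destroyer wins in at most two rounds: a Restrict with the $\head(\seq{r})$-spread componentwise layering leaves at most one vertex (an interval of at most $\head(\seq{r})$ consecutive integers contains at most one multiple of $\head(\seq{r})$), after which a single Delete finishes the game. So assume $d\ge 1$ and that $\TT_{d-1}$ is a Baker class; then, by Lemma~\ref{lemma-csum} and induction on $r$ with base case $\TT_{d-1,1}=\TT_{d-1}$, each class $\TT_{d-1,r}$ is Baker as well. The heart of the argument is to show that after two Restrict moves any $G\in\TT_d$ is reduced to a member of $\TT_{d-1,r}$ for an $r$ bounded by an entry of $\seq{r}$.

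Concretely, I would first prove the following structural claim: if $H$ is a connected chordal ordered graph of maximum left-degree at most $d$, $v$ is its smallest vertex, $\lambda$ is the BFS layering of $H$ starting from $v$, and $0\le a\le b$, then the ordered graph $H[\lambda^{-1}(\{a,\ldots,b\})]$ (with the inherited ordering) lies in $\TT_{d-1,b-a+1}$. Two ingredients enter. The first is a monotonicity fact: if $u\in\lambda^{-1}(j)$ with $j\ge 1$ and $w\in\lambda^{-1}(j-1)$ is a neighbor of $u$, then $w\prec u$. This follows by applying Lemma~\ref{lemma-mop} to the induced path obtained from a shortest $v$–$w$ path followed by the edge $wu$; that path is a shortest $v$–$u$ path, hence induced, and has $v$ as its smallest vertex. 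Since every vertex of $\lambda^{-1}(j)$ with $j\ge1$ has such a neighbor $w$, deleting the lower layers drops its left-degree by at least one, and in particular $H[\lambda^{-1}(j)]\in\TT_{d-1}$ for every $j\ge 1$, while $H[\lambda^{-1}(0)]=H[\{v\}]\in\TT_{d-1}$ because $d\ge1$; this is exactly where $d$ decreases, and it supplies the base case $b-a=0$ of an induction on $b-a$. The second ingredient is the clique-sum structure: for the inductive step I peel off the top layer, taking $B=\lambda^{-1}(\{a,\ldots,b-1\})$ as the base, so that $H[B]\in\TT_{d-1,b-a}$ by induction. For each component $C$ of $H[\lambda^{-1}(b)]$, the set $K_C$ of neighbors of $C$ in $B$ lies in $\lambda^{-1}(b-1)$; it is a clique, since otherwise gluing a shortest path through $N_{b-2}[v]$ to a shortest path through $C$ between two non-adjacent members of $K_C$ yields (those two pieces meeting only at their endpoints, with their interiors in layers differing by at least two) an induced cycle of length at least $4$, contradicting chordality; and $K_C$ precedes $C$ in the order, which follows by taking a shortest path from the smallest vertex of $C$ to the largest vertex $w^{*}$ of $K_C$, applying Lemma~\ref{lemma-mop} to it, and noting that the predecessor of $w^{*}$ on that path would then contradict the monotonicity fact applied to $w^{*}$. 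Hence $H[\lambda^{-1}(\{a,\ldots,b\})]\in\TT_{d-1,b-a}\oplus\TT_{d-1}=\TT_{d-1,b-a+1}$.

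With the claim in hand, Destroyer's strategy on $(G,\seq{r})$ for $G\in\TT_d$ is: first Restrict with the $\head(\seq{r})$-spread componentwise layering, reducing the state to $(G',\tail(\seq{r}))$ where $G'$ is empty or a connected component of $G$ (still in $\TT_d$); if $G'$ is nonempty, Restrict with the BFS layering of $G'$ from its smallest vertex, reducing to $(G'',\tail_2(\seq{r}))$ with $G''\in\TT_{d-1,\head(\tail(\seq{r}))}$ by the claim, using also $\TT_{d-1,s}\subseteq\TT_{d-1,s'}$ for $s\le s'$ (immediate from the definition of $\oplus$ with base the whole vertex set). Since $\TT_{d-1,\head(\tail(\seq{r}))}$ is Baker, Destroyer finishes in a bounded number of further rounds, and the total is bounded by a function of $d$ and $\seq{r}$ only. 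The step I expect to be the crux is the monotonicity fact together with its consequence that the BFS layers land in $\TT_{d-1}$ rather than merely in $\TT_d$: without this the induction would be circular, since $\TT_{d,r}$ is known Baker only once $\TT_d$ is. A secondary point needing care is the bookkeeping of how $\seq{r}$ is consumed — each Restrict passes to a tail of $\seq{r}$, and the round bound for $\TT_{d-1,r}$ obtained by iterating Lemma~\ref{lemma-csum} is stated in terms of subsequences of its input sequence, exactly as in that lemma's proof.
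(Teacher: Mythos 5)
Your proof is correct and follows essentially the same strategy as the paper: induct on $d$, reduce via a componentwise Restrict followed by a BFS Restrict to a graph in $\TT_{d-1,r_2}$, and invoke Lemma~\ref{lemma-csum} iteratively together with the induction hypothesis. The only (cosmetic) difference is that where the paper packages the two key structural facts — that each vertex in BFS layer $j\ge 1$ has a smaller neighbour in layer $j-1$, and that the top layer attaches to the lower slab as a clique-sum whose clique precedes the component — into Lemma~\ref{lemma-comp} and cites it, you re-derive both directly from Lemma~\ref{lemma-mop} and chordality; the underlying argument is identical.
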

\begin{proof}
We prove the claim by induction on $d$.  The class $\TT_0$ consists of ordered graphs with no edges,
and thus it is a Baker class.  Hence, we can assume $d\ge 1$.

Consider any sequence $\seq{r}=r_1,r_2,\ldots$. By the induction hypothesis and repeated applications of Lemma~\ref{lemma-csum},
$\TT_{d-1,r_2}$ is a Baker class.  Let $t$ be an integer such that for every $H\in\TT_{d-1,r_2}$,
Destroyer wins the Baker game on $(H,\tail_2(\seq{r}))$ in $t$ rounds.  We claim that for every $G\in\TT_d$,
Destroyer wins the Baker game on $(G,\seq{r})$ in $t+2$ rounds, by the following strategy.

First, Destroyer performs the Restrict action with the $r_1$-spread componentwise layering; Preserver's answer results
in a state $(G',\tail(\seq{r}))$, where $G'\in\TT_d$ is connected.  Next, Destroyer performs the Restrict action with
the BFS layering $\lambda$ starting from the smallest vertex $v$ of $G'$, and Preserver answers with an interval $I$ of
at most $r_2$ consecutive integers, changing the state to $(G'',\tail_2(\seq{r}))$, where $G''=G'[\lambda^{-1}(I)]$.

Let $b$ be the smallest element of $I$, and for $k\ge 1$, let $I_k=\{b,b+1,\ldots,b+k-1\}$.
By induction on $k\le r_2$, we prove that $G'[\lambda^{-1}(I_k)]\in\TT_{d-1,k}$.  
Let $j=b+k-1$ and consider the graph $G'_j=G'[\lambda^{-1}(j)]$.  If $j=0$, then $G'_j$ has only one vertex
and $G'_j\in \TT_{d-1}$.  If $j\ge 1$, then note that each vertex $v\in V(G'_j)$ has in $G'$ a neighbor $w$ in $\lambda^{-1}(j-1)$,
and by Lemma~\ref{lemma-comp}, $w$ is smaller than $v$.  Since $G'$ has maximum left-degree at most $d$, $G'_j$ has maximum left-degree
at most $d-1$, and thus $G'_j\in \TT_{d-1}$.  In particular, the claim holds when $k=1$, and thus we can assume $k\ge 2$.
By the induction hypothesis, we have $G'[\lambda^{-1}(I_{k-1})]\in\TT_{d-1,k-1}$, and by Lemma~\ref{lemma-comp}, it follows that
$G'[\lambda^{-1}(I_k)]\in\TT_{d-1,k}$.

We conclude that $G''\in\TT_{d-1,r_2}$, and thus Destroyer wins the Baker game on $(G'',\tail_2(\seq{r}))$ in $t$ rounds.
Consequently, Destroyer wins the Baker game on $(G,\seq{r})$ in $t+2$ rounds.
\end{proof}

\subsection{Forbidden minors}\label{sec-minors}

It is now easy to show that proper minor-closed classes are Baker.
The inspection of the proof of Lemma 4.1 in~\cite{van2017generalised} gives the following.
\begin{lemma}\label{lemma-str}
For every integer $k\ge 3$, every $K_k$-minor-free graph $G$ has a linear ordering of vertices such that the corresponding
ordered graph belongs to $\TT_{k-2}^{(k-2)}$.
\end{lemma}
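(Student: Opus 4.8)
The plan is to read off the required vertex partition and ordering from the structure produced inside the proof of Lemma~4.1 of~\cite{van2017generalised}, rather than from its statement. First I would dispose of disconnected graphs: $\TT_{k-2}^{(d)}$ is closed under disjoint union, since if $G_1,\dots,G_c$ are the components of $G$ one may simply concatenate orderings and partitions witnessing $G_i\in\TT_{k-2}^{(k-2)}$, and $G/\PP$ is then the disjoint union of the $G_i/\PP_i$, which is chordal of clique number at most $k-1$ provided each $G_i/\PP_i$ is. So assume $G$ is connected and fix a BFS layering $\lambda_0$ of $G$ from an arbitrary root, with layers $L_j=\lambda_0^{-1}(j)$. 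Inspecting the proof of~\cite[Lemma~4.1]{van2017generalised} should produce a partition $\PP=(P_1,\dots,P_m)$ of $V(G)$, with the parts ordered appropriately, satisfying: (i) each part $P_i$ is a union of at most $k-2$ geodesics of $G$, each monotone with respect to $\lambda_0$ (so $|P_i\cap L_j|\le k-2$ for all $j$); and (ii) the abstract graph $G/\PP$ is chordal with clique number at most $k-1$.

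Granting (i) and (ii), the lemma is routine. From (i): the restriction of $\lambda_0$ to $P_i$ is a layering of $G[P_i]$ of width at most $k-2$, and for every subgraph $G'$ of $G$ and $x,y\in V(G')$ we have $d_{G'}(x,y)\ge d_G(x,y)\ge|\lambda_0(x)-\lambda_0(y)|$, so this layering is $G[P_i\cup P_{i+1}\cup\cdots\cup P_m]$-geodesic no matter how the parts are ordered; hence $\PP$ is width-$(k-2)$ geodesic. From (ii): $G/\PP$ has a perfect elimination ordering, and reversing it yields an ordering of the parts under which $G/\PP$ is a chordal ordered graph of maximum left-degree at most $k-2$, i.e.\ $G/\PP\in\TT_{k-2}$, by the standard correspondence between treewidth, chordal supergraphs and left-degree recalled in the previous subsection. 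Ordering $V(G)$ so that the parts occur in this order (ties inside a part broken arbitrarily) makes $\PP$ a partition of the ordered graph $G$ with $G/\PP\in\TT_{k-2}$, so $G\in\TT_{k-2}^{(k-2)}$; combined with Lemma~\ref{lemma-winchord} and Lemma~\ref{lemma-compose} this is what is needed to conclude that proper minor-closed classes are Baker.

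The genuine work is in establishing (i) and (ii) from~\cite{van2017generalised}, i.e.\ in checking that the (generally overlapping) family of geodesics built there can be reorganised into an honest vertex partition whose quotient is chordal of bounded clique number. The clique bound $k-1$ is exactly what one expects: $k$ pairwise-adjacent parts would be $k$ pairwise $G$-adjacent connected sets, giving a $K_k$ minor of $G$; and the monotone-geodesic shape of the parts is what makes them thin in $\lambda_0$. The delicate point — the main obstacle — is the chordality of the quotient: one must verify that the tree along which~\cite{van2017generalised} produces its geodesic separators can be used as a tree decomposition of $G/\PP$ whose bags are cliques of size at most $k-1$, which is precisely what a careful reading of their argument, as opposed to a black-box use of their lemma, supplies. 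Everything else is bookkeeping.
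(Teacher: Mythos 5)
Your overall route is the paper's route: the paper gives no self-contained proof of this lemma either, but simply asserts that it is read off from the proof (not the statement) of Lemma~4.1 of~\cite{van2017generalised}, exactly as you propose, and your derivation of $G\in\TT_{k-2}^{(k-2)}$ from properties (i) and (ii) — reversed perfect elimination ordering giving left-degree at most $k-2$, clique number bounded via a $K_k$-minor, disjoint-union closure for the disconnected case — is the intended bookkeeping.

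There is, however, one concrete misstep in your property (i): you fix a \emph{single global} BFS layering $\lambda_0$ of $G$ and claim every part is a union of at most $k-2$ geodesics of $G$ monotone in $\lambda_0$, deducing geodesicity of the restricted layering from $d_{G'}(x,y)\ge d_G(x,y)\ge|\lambda_0(x)-\lambda_0(y)|$. That is not what the construction in~\cite{van2017generalised} produces, and it is false in general. The construction is iterative: having extracted $P_1,\dots,P_{i-1}$, one takes a component $C$ of the residual graph, runs a \emph{fresh} BFS inside $G[C]$, and lets $P_i$ be a union of at most $k-2$ paths that are vertical with respect to \emph{that} BFS. Since deleting earlier parts changes distances, these paths are shortest paths in $G[C]$ but not in $G$, and they need not be monotone in $\lambda_0$; a later part can place arbitrarily many vertices into a single layer of $\lambda_0$, destroying the width bound. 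This is precisely why the definition of a width-$d$ geodesic partition in Section~\ref{sec-prelim} asks only that $G[P_i]$ carry a layering of width at most $d$ that is $G[P_i\cup\cdots\cup P_m]$-geodesic, not $G$-geodesic: the BFS layering of the component $C$ of the residual graph restricted to $P_i$ has width at most $k-2$ (each vertical path meets each layer once) and is geodesic in $C$, hence in $G[P_i\cup\cdots\cup P_m]$, since $P_i$ lies in a single component of that graph. With this correction — one local layering per part instead of one global one — the rest of your argument, including the chordality of the quotient read off from the decomposition tree of~\cite{van2017generalised} and the clique bound from connectivity of the parts, goes through as you describe.
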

Let us remark that the proof of Lemma~\ref{lemma-str} is elementary and constructive, requiring at most $|V(G)|$ breadth-first
searches on $G$ to construct the required ordering and partition.
Lemmas~\ref{lemma-compose}, \ref{lemma-winchord} and \ref{lemma-str} now give our first main result.

\begin{corollary}
For every integer $k\ge 1$, there exists a Baker class $\KK_k$ such that
every $K_k$-minor-free graph together with some linear ordering of its vertices belongs to $\KK_k$.
\end{corollary}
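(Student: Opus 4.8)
The plan is to obtain $\KK_k$ directly by assembling the three preceding lemmas, with only the degenerate small cases needing separate attention. For $k\ge 3$ I would define $\KK_k\colonequals \TT_{k-2}^{(k-2)}$. By Lemma~\ref{lemma-winchord} applied with $d=k-2$, the class $\TT_{k-2}$ is Baker; then Lemma~\ref{lemma-compose} applied with $\CC=\TT_{k-2}$ and the same $d=k-2$ shows that $\TT_{k-2}^{(k-2)}=\KK_k$ is Baker as well. Finally, Lemma~\ref{lemma-str} guarantees that every $K_k$-minor-free graph admits a linear ordering of its vertices under which the resulting ordered graph lies in $\TT_{k-2}^{(k-2)}=\KK_k$, which is exactly the asserted property.

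For $k\le 2$ the corollary is essentially vacuous: a $K_1$-minor-free graph has no vertices, and a $K_2$-minor-free graph has no edges, so in both cases any linear ordering of the vertex set produces an ordered graph with no edges, i.e., a member of $\TT_0$. Since $\TT_0$ is a Baker class (the base case of the induction in Lemma~\ref{lemma-winchord}), we may simply take $\KK_1=\KK_2=\TT_0$. If one prefers a uniform formula, one can observe that $\TT_0\subseteq\TT_d^{(d')}$ for all $d,d'\ge 1$ (use the partition into singletons) and still set $\KK_k=\TT_{k-2}^{(k-2)}$ for every $k\ge 2$, but the explicit case split reads more cleanly.

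I expect no genuine obstacle at this stage, since all the substantive content has already been established: Lemma~\ref{lemma-winchord}, fed by Lemma~\ref{lemma-csum} to build the auxiliary clique-sum classes $\TT_{d,r}$, handles the bounded-treewidth core; Lemma~\ref{lemma-compose} handles the width-$d$ geodesic composition; and Lemma~\ref{lemma-str} supplies the structural decomposition of $K_k$-minor-free graphs without recourse to the minor structure theorem. The only points requiring a moment's care are matching up the two roles of the parameter $d=k-2$ (one as the left-degree bound in $\TT_{k-2}$, one as the composition width in the superscript) and noting that Lemma~\ref{lemma-str} is stated only for $k\ge 3$, which is precisely why the hypothesis $k\ge 1$ of the corollary forces the separate treatment of $k\in\{1,2\}$.
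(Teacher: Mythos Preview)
Your proposal is correct and follows exactly the paper's intended argument: the paper's proof is the single sentence ``Lemmas~\ref{lemma-compose}, \ref{lemma-winchord} and \ref{lemma-str} now give our first main result,'' and you have simply spelled out how those three lemmas compose, together with the (trivial) treatment of $k\in\{1,2\}$ that the paper leaves implicit.
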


\subsection{Algorithmic considerations}\label{ssec-alg}

In the algorithmic application we develop in the next section, it is of course important not only
that the Baker game can be won in a constant number of rounds for a particular class of graphs,
but also that the next step in the strategy can be determined efficiently.  Inspection of the proofs
shows this is the case in all the discussed situations.  Lemma~\ref{lemma-str} has algorithmic proof,
returning for an $n$-vertex $K_k$-minor-free graph the corresponding ordering, width-$(k-2)$ geodesic partition,
and layering of each of the parts in time $O(n^2)$.   Lemma~\ref{lemma-compose} then provides an explicit
description of the strategy, in terms of the strategy for $\TT_{k-2}$ which in turn is explicitly provided
by Lemmas~\ref{lemma-csum} and \ref{lemma-winchord}; the most time-demanding step in determining the current action
is the breadth-first search in Lemma~\ref{lemma-csum}, and thus the action can be determined in time $O(n)$.
Applications of Lemma~\ref{lemma-sg} throughout the proofs of other lemmas are dealt with by keeping track
of the appropriate supergraph of the currently considered graph, determining the appropriate actions in this
supergraph, and translating them to actions in the subgraph as described in the proof of Lemma~\ref{lemma-sg}.

For functions $f,s:\mathbb{N}\to\mathbb{N}$, we say that a class $\CC$ of graphs is \emph{$(f,s)$-efficiently Baker}
if there exist algorithms $\Aa_1$ and $\Aa_2$ and for every sequence $\seq{r}$ there exists an integer $t$ so that
\begin{itemize}
\item for each $G\in\CC$, the algorithm $\Aa_1$ determines in time $f(|V(G)|)$ an ordering of vertices of $G$
(and possibly other auxiliary information) so that for the resulting ordered graph,
\item the algorithm $\Aa_2$ wins the Baker game on $(G,\seq{r})$ in $t$ rounds, using time $s(|V(G)|)$ to determine
the action at each state of the game.
\end{itemize}
With this definition, the analysis presented at the beginning of this subsection implies the following.

\begin{theorem}\label{thm-minfr}
For every integer $k\ge 1$, the class of $K_k$-minor-free graphs is $(O(n^2),O(n))$-efficiently Baker.
\end{theorem}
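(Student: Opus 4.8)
The plan is to assemble Theorem~\ref{thm-minfr} by making each structural lemma from the preceding sections algorithmic and then chaining the running-time bounds. The statement to prove is that the class of $K_k$-minor-free graphs is $(O(n^2),O(n))$-efficiently Baker; by the definition in Section~\ref{ssec-alg} this requires exhibiting $\Aa_1$ (a preprocessing algorithm running in time $O(n^2)$ that produces an ordering and auxiliary data) and $\Aa_2$ (a strategy algorithm that, given the current state, computes Destroyer's action in time $O(n)$ and wins in a constant number of rounds depending only on $k$ and $\seq{r}$). First I would take $\Aa_1$ to be the algorithm underlying Lemma~\ref{lemma-str}: as remarked in the text, its proof is elementary and constructive and needs at most $|V(G)|$ breadth-first searches to produce the linear ordering, the width-$(k-2)$ geodesic partition $\PP$, and a $G[P_i\cup\cdots\cup P_m]$-geodesic layering of width $\le k-2$ for each part $P_i$. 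Each BFS costs $O(n+m)=O(n)$ on a $K_k$-minor-free graph (which has $O(n)$ edges by Mader's bound, with the constant depending on $k$), so $\Aa_1$ runs in time $O(n^2)$. The output of $\Aa_1$ is exactly what certifies that the ordered graph lies in $\TT_{k-2}^{(k-2)}$, so by Lemma~\ref{lemma-compose} (with $\CC=\TT_{k-2}$, $d=k-2$) the class in question is Baker, giving the existence of the constant $t=t(\seq{r})$.

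Next I would verify that $\Aa_2$ can be extracted with the claimed $O(n)$ per-move cost by walking through the proofs of Lemmas~\ref{lemma-winchord}, \ref{lemma-csum}, \ref{lemma-compose} and checking each is constructive. In Lemma~\ref{lemma-winchord} Destroyer's moves are: perform a Restrict with an $r$-spread componentwise layering (found by one BFS/connected-components pass, $O(n)$), then a Restrict with a BFS layering from the current smallest vertex ($O(n)$), and the membership analysis $G''\in\TT_{d-1,r_2}$ is only needed to know which class we have reached, not to compute a move. Lemma~\ref{lemma-csum}'s moves are a spread componentwise layering, then (following the recursive strategy on the base) either a Restrict with a layering obtained by extending a base-layering across components — each component needs one representative neighbour in $B'$, computable by a single BFS over $G'-B'$ in $O(n)$ — or a Delete, plus a bounded number of bookkeeping Delete actions; Lemma~\ref{lemma-comp} is invoked only in the correctness argument. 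Lemma~\ref{lemma-compose}'s strategy lifts a move on $H=G/\PP$ to a move on $G$: a Restrict on $H$ becomes a Restrict with the pulled-back layering $\lambda_\star$ (computed in $O(n)$ by reading off $\lambda$ on parts), a Delete of part $P$ on $H$ becomes a Restrict on $G$ with a layering extending the stored width-$(k-2)$ geodesic layering of $G[P]$ (extension via Lemma~\ref{lemma-exlay}, whose proof sets $\lambda(v)=\max_{x\in P}(\lambda(x)-d_G(x,v))$ — this is one BFS-style single-source-to-set distance computation, $O(n)$), each followed by a bounded number of Delete actions. In every case the number of Delete actions interspersed is bounded by a function of $k$ and finitely many entries of $\seq{r}$, so the total round count is the constant $t$ from Lemma~\ref{lemma-compose}, and each individual action costs $O(n)$.

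Finally I would address the one genuinely fiddly point: the repeated invocations of Lemma~\ref{lemma-sg} scattered through the proofs (the ``assume $\CC$ subgraph-closed'' reductions, and the passage from $G'[\lambda^{-1}(I)]$ to the honest subgraph after the bookkeeping Deletes). As the text already indicates at the start of Section~\ref{ssec-alg}, the resolution is to maintain, alongside the actual current graph, a pointer to the larger ``nominal'' graph on which the higher-level strategy is being simulated, compute the nominal action there, and translate it down exactly as in the proof of Lemma~\ref{lemma-sg} (restrict the layering to the real vertex set; replace a nominal Delete of the nominal smallest vertex by a Delete of the real smallest vertex, which the proof of Lemma~\ref{lemma-sg} shows is consistent). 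This translation is a linear scan, $O(n)$. Putting it together: $\Aa_1$ runs in $O(n^2)$, $\Aa_2$ spends $O(n)$ per move and wins in $t(\seq{r})$ rounds, so the class of $K_k$-minor-free graphs is $(O(n^2),O(n))$-efficiently Baker. The main obstacle is not any single lemma but the care needed to confirm that the Lemma~\ref{lemma-sg}-style bookkeeping composes cleanly through three nested recursions without blowing up the per-move cost or the round count; once one is convinced that each level only adds a $k$- and $\seq{r}$-bounded overhead, the theorem follows.
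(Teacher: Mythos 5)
Your proposal is correct and follows essentially the same route as the paper: the paper's own justification of Theorem~\ref{thm-minfr} is precisely the inspection at the start of Section~\ref{ssec-alg}, taking $\Aa_1$ from the constructive proof of Lemma~\ref{lemma-str} in time $O(n^2)$ and extracting $\Aa_2$ from the explicit strategies in Lemmas~\ref{lemma-compose}, \ref{lemma-csum} and \ref{lemma-winchord}, with the breadth-first searches as the dominant $O(n)$ per-move cost and the Lemma~\ref{lemma-sg} applications handled by tracking the nominal supergraph. Your write-up is in fact somewhat more detailed than the paper's (e.g., the sparsity of $K_k$-minor-free graphs and the implementation of the layering extension from Lemma~\ref{lemma-exlay}), and contains no gaps.
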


\subsection{Graphs embedded with bounded distortion}\label{sec-distort}

Let us now give another example of Baker classes.  For a real number $\beta\ge 1$,
we say that a graph $G$ \emph{embeds in $\mathbb{R}^d$ with distortion $\beta$}
if there exists a mapping $\mu:V(G)\to \mathbb{R}^d$ such that $|\mu(u)-\mu(v)|_\infty\ge 1$
for every distinct $u,v\in V(G)$, and $|\mu(u)-\mu(v)|_\infty\le \beta$ for every $uv\in E(G)$.
For example, the $3$-dimensional grids with diagonals naturally embed in $\mathbb{R}^3$ with distortion $1$.

For $i=1,\ldots,d$, let $\pi_i:\mathbb{R}^d\to \mathbb{R}$ denote the projection to the $i$-th coordinate.
Given a graph $G$ with embedding $\mu$ in $\mathbb{R}^d$ with distortion $\beta$, we for $i=1,\ldots,d$ define a layering
$\lambda_i$ as $\lambda_i(v)=\lfloor \pi_i(\mu(v))/\beta\rfloor$.  For a sequence $\seq{r}=r_1,r_2,\ldots$, we can now apply the following
strategy: First, we perform Restrict actions with layerings $\lambda_1$, \ldots, $\lambda_d$.  Any of the remaining vertices $u$ and $v$
then satisfy $|\pi_i(u)-\pi_i(v)|\le \beta r_i$ for $i=1,\ldots, d$.  Since any distinct vertices satisfy $|\mu(u)-\mu(v)|_\infty\ge 1$,
it follows that there are at most $\prod_{i=1}^d (\beta r_i+1)$ vertices left, and we can remove them by repeated Delete actions.
Hence, Destroyer wins the Baker game in $d+\prod_{i=1}^d (\beta r_i+1)$ rounds (for any ordering of the vertices of $G$).  We conclude
the following holds.

\begin{theorem}\label{thm-distort}
Let $d\ge 1$ be an integer, let $\beta\ge 1$ be a real number, and let $\CC$ be a class of graphs such that
for every $G\in\CC$, an embedding in $\mathbb{R}^d$ with distortion $\beta$ can be found in time $f(|V(G)|)$.
Then $\CC$ is $(f,O(n))$-efficiently Baker.
\end{theorem}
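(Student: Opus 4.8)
The plan is to make the informal strategy sketched right before the statement fully precise, by describing the explicit algorithmic strategy $\Aa_2$ for the Destroyer and checking the round count depends only on $d$, $\beta$, and the sequence $\seq{r}$ (not on $|V(G)|$), while the per-step running time is linear. Algorithm $\Aa_1$ is the promised embedding procedure: on input $G\in\CC$ it computes $\mu:V(G)\to\mathbb{R}^d$ with $|\mu(u)-\mu(v)|_\infty\ge 1$ for distinct $u,v$ and $|\mu(u)-\mu(v)|_\infty\le\beta$ for every edge, in time $f(|V(G)|)$, and fixes an arbitrary ordering of $V(G)$; this is the only place $f$ enters. We also record, for later use, the $d$ precomputed layerings $\lambda_i(v)=\lfloor\pi_i(\mu(v))/\beta\rfloor$; each $\lambda_i$ is a valid layering because for an edge $uv$ we have $|\pi_i(\mu(u))-\pi_i(\mu(v))|\le\beta$, so $\lambda_i(u)$ and $\lambda_i(v)$ differ by at most $1$.

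For the strategy of $\Aa_2$ on state $(G,\seq{r})$ with $\seq{r}=r_1,r_2,\ldots$: in the first $d$ rounds (assuming $V(G)$ is not yet empty), round $i$ plays the Restrict action with layering $\lambda_i$ restricted to the current graph; the Preserver then picks an interval $I_i$ of at most $r_i$ consecutive integers. After these $d$ rounds the surviving vertices all lie in $\lambda_i^{-1}(I_i)$ for every $i$, so for any two surviving $u,v$ and each $i$ we have $\lfloor\pi_i(\mu(u))/\beta\rfloor,\lfloor\pi_i(\mu(v))/\beta\rfloor\in I_i$, whence $|\pi_i(\mu(u))-\pi_i(\mu(v))|<\beta(r_i+1)$, so the surviving vertices lie in a box of side less than $\beta(r_i+1)$ in coordinate $i$. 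Since distinct vertices are at $\ell_\infty$-distance at least $1$, a volume/packing argument bounds the number of survivors by $\prod_{i=1}^{d}(\beta r_i+1)$ (more carefully: the number of integer points, or a grid of unit cells, fitting in that box). Destroyer then uses that many Delete actions, one per vertex, emptying the graph. Hence Destroyer wins in at most $t=d+\prod_{i=1}^{d}(\beta r_i+1)$ rounds regardless of the Preserver's choices, and $t$ depends only on $d$, $\beta$, $\seq{r}$, as required.

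The running time of $\Aa_2$ per state is linear: with $\lambda_1,\ldots,\lambda_d$ precomputed by $\Aa_1$, determining the Restrict action in round $i$ just reports the stored layering (a table lookup over the vertex set, $O(n)$), and each Delete action reports the current smallest vertex, again $O(n)$; the box-packing bound used for the analysis is not computed at runtime. Thus $\CC$ is $(f,O(n))$-efficiently Baker. The only subtle point — and the one I expect to need the most care — is the counting step: one must verify that $\prod_{i=1}^{d}(\beta r_i+1)$ is a valid upper bound on the number of points of a set in $\mathbb{R}^d$ that is $1$-separated in $\ell_\infty$ and whose coordinate-$i$ extent is less than $\beta(r_i+1)$; the clean way is to associate to each survivor the half-open unit cube with that point as its minimal corner, observe these cubes are pairwise disjoint and contained in a box of side $\beta r_i+1$ in coordinate $i$, and compare volumes. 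Everything else is bookkeeping, using that each $\lambda_i$ is genuinely a layering and that restricting a layering to a subgraph is again a layering.
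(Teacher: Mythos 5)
Your proposal is correct and follows the same approach as the paper, which sketches exactly this strategy (the $d$ coordinate layerings $\lambda_i(v)=\lfloor\pi_i(\mu(v))/\beta\rfloor$, then a Restrict in each coordinate, then deleting the boundedly many survivors) in the text preceding the theorem. The only nitpick is that restricting $\lambda_i$ to an interval of at most $r_i$ integers actually gives $|\pi_i(\mu(u))-\pi_i(\mu(v))|<\beta r_i$, slightly tighter than your $<\beta(r_i+1)$, which is what yields the paper's $\prod_i(\beta r_i+1)$ bound via your cube-packing argument; your looser estimate changes only the constant, not the conclusion.
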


\section{Polynomial-time approximation schemes}\label{sec-ptas}

\subsection{PTAS for positive FO minimization problems}\label{sec-positive}

For an integer $r\ge 0$, an \emph{$r$-local formula} is a formula $\psi$ with one free variable $x$
such that all quantifications in $\psi$ are of form $(\exists y:d(x,y)\le r)$ or $(\forall y:d(x,y)\le r)$,
where $d(x,y)\le r$ should be interpreted as the first-order formula describing that the distance between $x$ and $y$ is at most
$r$, that is, $(\exists z_0,\ldots, z_r)\, z_0=x\land z_r=y\land \bigwedge_{i=1}^r (z_{i-1}=z_i\lor e(z_{i-1},z_i))$.
That is, the validity of $\psi(x)$ depends only on the neighborhood up to distance $r$ from the vertex interpreting $x$
in the graph.

A \emph{basic existential sentence} is a sentence of form
$$(\exists x_1,\ldots,x_m) \bigwedge_{1\le i<j\le m} d(x_i,x_j)>2r \land \bigwedge_{i=1}^m \psi(x_i)$$
for some integers $m$ and $r$, where $\psi$ is $r$-local.
A \emph{basic universal sentence} is a sentence of form
$$(\forall x_1,\ldots,x_{m+1}) \bigwedge_{1\le i<j\le m+1} d(x_i,x_j)>2r \Rightarrow \bigvee_{i=1}^{m+1} \psi(x_i)$$
for some integers $m$ and $r$, where $\psi$ is $r$-local.  In both cases, $m$ is the \emph{spread},
$r$ is the \emph{range}, and $\psi$ is the \emph{core} of the sentence.
A \emph{basic conjunction} is a conjunction of basic existential and universal sentences,
and its spread and range is the maximum of spreads and ranges of these sentences.

Dawar et al.~\cite{dawar2006approximation} proved the following important result, building upon Gaifman's locality theorem~\cite{gaifman1982local}.
\begin{theorem}\label{thm-local}
Let $L$ be a graph language including a unary predicate $X$.
For every $X$-positive ($X$-negative) first-order sentence $\varphi$ over $L$, there exists an $X$-positive ($X$-negative, respectively)
first-order sentence $\theta$ over $L$ such that $\theta$ is a disjunction of basic conjunctions and for every $L$-interpretation $G$,
$G\models \varphi$ if and only if $G\models \theta$.
\end{theorem}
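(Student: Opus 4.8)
The plan is to reduce to Gaifman's locality theorem~\cite{gaifman1982local} and then massage the resulting Gaifman normal form so that $X$-positivity (respectively $X$-negativity) is preserved while obtaining the required syntactic shape (a disjunction of basic conjunctions). First I would recall that Gaifman's theorem gives, for any first-order $\varphi$, an equivalent sentence that is a Boolean combination of \emph{local sentences} of the form $(\exists x_1,\ldots,x_t)\bigl(\bigwedge_{i<j} d(x_i,x_j)>2\rho \land \bigwedge_i \chi(x_i)\bigr)$ with $\chi$ being $\rho$-local. The key observation is that such a Boolean combination can be rewritten in disjunctive normal form, where each disjunct is a conjunction of local sentences and \emph{negated} local sentences; the work is to eliminate the negations in a way that does not disturb the polarity of $X$.

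The main steps, in order, would be: (1) apply Gaifman's theorem to $\varphi$ to obtain an equivalent Boolean combination $\theta_0$ of local sentences; (2) normalize the \emph{range} by taking $r$ to be the maximum $\rho$ occurring, replacing each $\chi$ that is $\rho$-local by an $r$-local formula (padding the ball radius is harmless); (3) push the Boolean structure to DNF so $\theta_0$ becomes a disjunction of conjunctions of local sentences and negations thereof; (4) convert each negated local sentence $\neg(\exists x_1,\ldots,x_t)(\ldots)$ into a basic universal sentence --- this is precisely the syntactic shape with $m=t-1$ and the implication $\bigwedge_{i<j}d(x_i,x_j)>2r \Rightarrow \bigvee_i \neg\chi(x_i)$, after noting $\neg$ commutes through the local quantifiers of $\chi$ so that $\neg\chi$ is again $r$-local --- while each positive local sentence is already a basic existential sentence; (5) conclude that each disjunct is a basic conjunction, so $\theta_0$ is a disjunction of basic conjunctions.

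The delicate point, and the one where I would spend most care, is tracking the polarity of $X$ through all of this. In step (1), Gaifman's proof is by induction on formula structure and in general does not respect prescribed polarities; so rather than appealing to the theorem as a black box I would either invoke the polarity-preserving refinement already present in~\cite{dawar2006approximation}, or argue directly: since $\varphi$ is $X$-positive, every occurrence of $X$ lies under an even number of negations, and each transformation used (distributing quantifiers over $\land,\lor$, moving to DNF, the rewriting in step (4) which turns one outer negation of a sentence whose body has $X$ under an even count into $\neg\chi$ with $X$ now under an odd count \emph{inside} but this negated sentence contributes to a conjunct, not under a further negation) must be checked to leave $X$ under an even number of negations in the final $\theta$. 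Concretely, a negated existential local sentence in the DNF has the form $\neg S$ where inside $S$ the predicate $X$ sits under an even number of negations; rewriting $\neg S$ as the basic universal sentence introduces exactly one more negation on the core, making the count odd --- which is \emph{wrong} for $X$-positive $\varphi$. This is resolved by the standard fact (used implicitly by Dawar et al.) that in the Gaifman normal form of an $X$-positive sentence one may take the Boolean combination to be \emph{monotone} in the local sentences --- i.e., no negations of local sentences are needed at all --- precisely because monotonicity of $\varphi$ in $X$ transfers to monotonicity of the equivalent formula in its atomic building blocks. I would therefore either cite this strengthening from~\cite{dawar2006approximation} directly or reprove it by the polarity-tracking induction, and this is the crux of the argument; everything else is routine rewriting.

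Given that strengthening, steps (2)--(5) go through verbatim for both the $X$-positive and $X$-negative cases (the latter being symmetric via the negation duality noted earlier in the paper, or by running the same argument with polarities swapped), yielding $\theta$ as a disjunction of basic conjunctions equivalent to $\varphi$ on all $L$-interpretations, as required.
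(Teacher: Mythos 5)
First, a point of comparison: the paper does not prove Theorem~\ref{thm-local} at all --- it is quoted from Dawar et al.~\cite{dawar2006approximation}. So your fallback of ``invoke the polarity-preserving refinement already present in~\cite{dawar2006approximation}'' is precisely what the paper itself does, and on that reading your outline is acceptable. The overall shape of your argument (Gaifman normal form, uniformize the radius, pass to DNF, rewrite negated local sentences as basic universal sentences) is also the right one, and you correctly locate the crux in the interaction between the outer negations of local sentences and the polarity of $X$.

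However, your attempted self-contained resolution of that crux rests on a false claim. You assert that for an $X$-positive $\varphi$ ``one may take the Boolean combination to be monotone in the local sentences, i.e., no negations of local sentences are needed at all.'' This is wrong: $X$-positivity constrains only the occurrences of the predicate $X$, not the Boolean structure over the local sentences. Concretely, $(\forall x)\,X(x)$ is $X$-positive, but every existential local sentence (and hence every positive Boolean combination of them) is preserved under adding isolated vertices --- the witnesses remain pairwise far apart and the $r$-local cores are evaluated on unchanged balls --- whereas $(\forall x)\,X(x)$ is destroyed by adding an isolated vertex outside $X$. The same applies to the paper's running example, the dominating set sentence. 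Negated local sentences are therefore unavoidable (which is exactly why \emph{basic universal sentences} appear in the target normal form). The statement you actually need is the \emph{polarized} refinement: in the Gaifman normal form of an $X$-positive sentence, local sentences occurring positively can be taken to have $X$-positive cores and those occurring negatively to have $X$-negative cores; then $\neg S$ rewritten as a basic universal sentence has core $\neg\chi$, which is $X$-positive, as required. Your step (4) has the polarities backwards --- you assume $X$ sits under an even number of negations inside a negated local sentence $S$ and conclude the rewriting goes ``wrong,'' whereas what must be (and can be) arranged is that $X$ sits under an odd number of negations inside such $S$. Establishing this polarized refinement requires redoing Gaifman's induction while tracking polarity (or citing Dawar et al.\ for it); your proposal as written does not supply that step, and the shortcut you offer in its place fails.
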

In view of this theorem, we can assume that every first-order sentence is given as a disjunction of basic conjunctions,
and we define its spread and range as the maximum of spreads and ranges of the terms of the disjunction.

Furthermore, we need the following observation (see also~\cite{dawar2006approximation}).
\begin{lemma}\label{lemma-remv}
Let $L\cup\{X,Z\}$ be a graph language.
For every $X$-positive ($X$-negative) first-order sentence $\varphi$ over $L\cup\{X\}$, there exist $X$-positive ($X$-negative, respectively)
first-order sentences $\varphi_1$
and $\varphi_2$ over $L\cup \{X,Z\}$ such that the following holds.  Let $G$ be an $L$-interpretation, let $v$ be a vertex of $\overline{G}$,
and let $A\subseteq V(\overline{G})$.  Then $G,X\colonequals A\models \varphi$ if and only if either
\begin{itemize}
\item $v\in A$ and $G-v,X\colonequals A\setminus\{v\},Z\colonequals N_G(v)\models \varphi_1$, or
\item $v\not\in A$ and $G-v,X\colonequals A,Z\colonequals N_G(v)\models \varphi_2$.
\end{itemize}
\end{lemma}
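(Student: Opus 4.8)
The plan is to eliminate the vertex $v$ by a syntactic transformation of $\varphi$ that bakes the adjacency to $v$ into the new unary predicate $Z$ (interpreted as $N_G(v)$), treating the two cases $v\in A$ and $v\notin A$ separately. First I would handle the case $v\notin A$, producing $\varphi_2$. The idea is to relativize all quantifiers in $\varphi$ to $V(\overline G)\setminus\{v\}$ and to rewrite every atomic formula involving $v$ so that it refers instead to the new predicate $Z$. Concretely, since $v$ will no longer be a vertex of the structure, I would replace, for a variable $w$, the subformula $e(w,v)$ (which in $\varphi$ only arises through distance formulas) by $Z(w)$, and the subformula ``$w=v$'' by ``false''. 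The key point making this work cleanly is that Theorem~\ref{thm-local} lets us assume $\varphi$ is a disjunction of basic conjunctions, so $v$ only enters through the $r$-local cores $\psi(x_i)$ and the distance predicates $d(x_i,x_j)>2r$; one then observes that a path through $v$ of length $\ell$ in $\overline G$ corresponds exactly to a pair of edges into $N_G(v)=Z$ joined by ``$v$ in the middle'', so all distance formulas can be rewritten to distance formulas in $\overline G - v$ together with a disjunction over whether the geodesic passes through $v$ (``$d'(x,y)\le r$'' $\lor$ ``$(\exists z_1 z_2)(Z(z_1)\land Z(z_2)\land d'(x,z_1)+d'(z_2,y)\le r-2)$'', and similar for the short cases). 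Crucially, this rewriting never introduces a negation in front of $X$, so $X$-positivity (resp.\ $X$-negativity) is preserved, because $X$ and the adjacency-to-$v$ atoms are syntactically independent.

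For the case $v\in A$, producing $\varphi_1$, I would do the same relativization-and-substitution, but additionally account for the fact that the removed vertex $v$ lies in $A$. Since $\varphi$ is $X$-positive, the effect of $v\in A$ is captured by letting the new formula ``pretend'' there is an extra vertex satisfying $X$ adjacent to exactly the vertices of $Z$: every quantifier $(\exists y)\ldots$ in $\varphi$ gets an extra disjunct for ``$y=v$'', in which $y$ is known to satisfy $X(y)$ and to be adjacent exactly to $Z$; every $(\forall y)\ldots$ gets an extra conjunct for the same case. Again the distance formulas are rewritten as above. The resulting $\varphi_1$ is $X$-positive because the extra disjuncts/conjuncts only add occurrences of $X$ (never $\neg X$). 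One has to be a little careful that distances measured from the phantom vertex $v$ itself are handled (a path from $v$ to $w$ of length $\ell$ is a vertex $z_1\in Z$ with $d_{\overline G-v}(z_1,w)\le \ell-1$), but this is the same mechanism as before.

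The main obstacle I expect is bookkeeping rather than conceptual difficulty: one must verify that the substitution commutes with the semantics on \emph{all} subformulas --- in particular that after relativizing quantifiers to $V(\overline G)\setminus\{v\}$, the unfolded distance formula $(\exists z_0,\ldots,z_r)\,z_0=x\land z_r=y\land\bigwedge(z_{i-1}=z_i\lor e(z_{i-1},z_i))$ evaluated in $\overline G$ agrees with its rewrite evaluated in $\overline G-v$ with $Z\colonequals N_G(v)$, taking into account that consecutive $z_i$'s may coincide. This is a routine but slightly tedious induction on formula structure (or, more efficiently, on the structure of the disjunction of basic conjunctions guaranteed by Theorem~\ref{thm-local}, where only distance atoms and $r$-local cores need attention). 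The one genuine point to keep in mind throughout is the parity of negations: since neither the quantifier relativization, nor the edge-to-$Z$ substitution, nor the added disjuncts/conjuncts for the $v\in A$ case ever places $X$ under a new negation, both $X$-positivity and $X$-negativity are preserved, giving the two parenthetical variants of the statement simultaneously.
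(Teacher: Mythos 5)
The paper states this lemma without proof (it is presented as an observation), so there is no proof of record to compare against; judged on its own, your construction has a genuine gap in the case $v\notin A$. Removing $v$ from the structure does not remove it from the range of the quantifiers of $\varphi$: even when $v\notin A$, the vertex $v$ can still serve as a witness for an existential quantifier and must still be checked by every universal quantifier. Your $\varphi_2$ only relativizes quantifiers to $V(\overline{G})\setminus\{v\}$ and rewrites adjacency-to-$v$ atoms, so it never evaluates the matrix of $\varphi$ at $x\colonequals v$. Concretely, take the dominating-set sentence $\varphi\equiv(\forall x)\bigl[X(x)\lor(\exists y)(e(x,y)\land X(y))\bigr]$, let $\overline{G}$ be the star with center $c$ and leaves $v,w$, and let $A=\{w\}$. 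Then $G,X\colonequals A\not\models\varphi$ because $v$ is dominated by neither itself nor $c$, yet your $\varphi_2$ evaluated on $G-v$ (the edge $cw$) with $X\colonequals\{w\}$, $Z\colonequals\{c\}$ is satisfied. The fix is to treat the two cases symmetrically, exactly as you already do for $\varphi_1$: every quantifier in \emph{both} $\varphi_1$ and $\varphi_2$ gets the extra branch ``$y=v$'', under which $e(y,w)$ becomes $Z(w)$, $y=w$ becomes false, and $X(y)$ becomes \emph{true} in $\varphi_1$ and \emph{false} in $\varphi_2$; that substitution is the only place where the two sentences differ. (Replacing an atom by a truth constant introduces no new occurrence of $X$, so $X$-positivity/negativity is preserved for free; positivity is not what justifies the phantom-vertex simulation, which is exact for arbitrary $\varphi$.)

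Two smaller remarks. First, a direct structural induction on an arbitrary sentence $\varphi$ (carrying along the set of free variables currently assigned $v$) is simpler than your route through Theorem~\ref{thm-local}: you then never need to rewrite distance formulas or reason about geodesics through $v$, and the paper re-normalizes the resulting sentences via Theorem~\ref{thm-local} afterwards anyway. Second, when a quantified variable is instantiated to $v$, an atom $C(y)$ for $C\in L$ becomes the fixed truth value of ``$v\in S_C$'', which cannot be recovered from $G-v$; so $\varphi_1,\varphi_2$ must be allowed to depend on the atomic type of $v$ (equivalently, one produces one pair of sentences per type). The lemma as stated glosses over this, but a complete proof should acknowledge it.
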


Let $\varphi$ be a basic existential or universal sentence of spread at most $m$, range $r$, and core $\psi$,
over graph language $L$.  Let $C,M\not\in L$ be distinct unary predicate symbols,
and let $k\le m$ be a non-negative integer.  The \emph{$(C,M,k)$-variant} of $\varphi$ is the sentence
$$\varphi^{(k)}\equiv(\exists x_1,\ldots,x_k) \bigwedge_{1\le i<j\le k} d(x_i,x_j)>2r \land \bigwedge_{i=1}^k (C(x_i)\land \psi(x_i))$$
if $\varphi$ is basic existential and
$$\varphi^{(k)}\equiv(\forall x_1,\ldots,x_{k+1}) \Bigl(\bigwedge_{i=1}^{k+1}M(x_i)\land \bigwedge_{1\le i<j\le k+1} d(x_i,x_j)>2r \Bigr)\Rightarrow \bigvee_{i=1}^{k+1} \psi(x_i)$$
if $\varphi$ is basic universal.  Note that the $(C,M,k)$-variant is $X$-positive ($X$-negative) if $\varphi$ is $X$-positive ($X$-negative, respectively).

For integers $\ell$, $n$, and $r$ such that $\ell\ge 2r+1$, the set of all intervals of consecutive integers $I$ of form
$\{i,i+1,\ldots,i+\ell-1\}$ such that $i\equiv n\pmod{\ell-2r}$ is an \emph{$(\ell,r)$-cover} of integers; note there
are precisely $\ell-2r$ distinct $(\ell,r)$-covers.  For an integer $d\ge 0$, let $M_d(I)$ for the interval $I$ in an $(\ell,r)$-cover
denote the subinterval $\{i+d,\ldots,i+\ell-d-1\}$.
For an integer $m\ge 0$, an \emph{$m$-plan} for a cover $R$
is a function $p:R\to\mathbb{Z}_0^+$ such that $\sum_{I\in R} p(I)=m$ (and in particular, $p$ is non-zero at at most $m$
elements of $R$).

We need the following observation.
\begin{lemma}\label{lemma-cert}
Let $\varphi$ be a basic existential or universal sentence of spread $m$ and range $r'$, over graph language $L$.
Let $C,M\not\in L$ be distinct unary predicate symbols.  Let $G$ be an $L$-interpretation and let $\lambda$ be a layering of $\overline{G}$.
Let $R$ be an $(\ell,r)$-cover of integers for some $r\ge r'$ and $\ell>4r$ and let $p$ be an $m$-plan for $R$.
Suppose that for each $I\in R$ we have
$$G[\lambda^{-1}(I)],C\colonequals\lambda^{-1}(M_{2r}(I)),M\colonequals\lambda^{-1}(M_r(I))\models \varphi^{(p(I))}.$$
Then $G\models \varphi$.
\end{lemma}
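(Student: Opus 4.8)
The plan is to establish $G\models\varphi$ by a direct argument, splitting into the cases where $\varphi$ is basic existential and where it is basic universal; in both the engine is the locality of the core $\psi$. Since $\psi$ is $r'$-local, the truth of $\psi(x)$ depends only on the induced subgraph on the set of vertices at distance at most $r'$ from $x$ together with their $L$-colours, and since $\lambda$ is a layering, all such vertices lie in layers $\lambda(x)-r',\ldots,\lambda(x)+r'$. Hence, using $r'\le r$: if $\lambda(x)\in M_r(I)$ for some $I\in R$, the radius-$r'$ ball around $x$ stays inside $\lambda^{-1}(I)$, so $\psi(x)$ has the same truth value in $G$ and in $G[\lambda^{-1}(I)]$; and if even $\lambda(x)\in M_{2r}(I)$, then the radius-$2r'$ ball around $x$ stays inside $\lambda^{-1}(I)$, so additionally all distances between $x$ and vertices at distance at most $2r'$ from it coincide in $G$ and $G[\lambda^{-1}(I)]$. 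I would also record two facts about the cover $R$: the sets $\{M_r(I):I\in R\}$ partition $\mathbb{Z}$ (consecutive intervals of $R$ overlap in exactly $2r$ integers and each $M_r(I)$ has $\ell-2r$ elements, with $\ell>4r$ guaranteeing these sets are non-empty), and for distinct $I,I'\in R$ the sets $M_{2r}(I)$ and $M_{2r}(I')$ are separated by at least $2r+1$ integers.

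In the existential case, for each $I$ with $p(I)>0$ I would unpack the hypothesis that $G[\lambda^{-1}(I)]$ with the colouring $C\colonequals\lambda^{-1}(M_{2r}(I))$ satisfies $\varphi^{(p(I))}$, obtaining $p(I)$ vertices, all with $\lambda$-value in $M_{2r}(I)$, pairwise at distance more than $2r'$ in $G[\lambda^{-1}(I)]$, each satisfying $\psi$ there. Then I would check that (i) each of these vertices satisfies $\psi$ in $G$, since its radius-$r'$ ball is unchanged; (ii) two such vertices coming from the same interval $I$ are at distance more than $2r'$ in $G$, since a $G$-path of length at most $2r'$ from one of them would be confined to $\lambda^{-1}(I)$ and would contradict their distance in $G[\lambda^{-1}(I)]$; and (iii) two such vertices coming from different intervals have $\lambda$-values differing by at least $2r+1$, hence are at distance at least $2r+1>2r'$ in $G$. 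Taking all $\sum_{I\in R}p(I)=m$ of these vertices yields a tuple witnessing $(\exists x_1,\ldots,x_m)\bigwedge_{i<j}d(x_i,x_j)>2r'\land\bigwedge_i\psi(x_i)$, that is, $G\models\varphi$.

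In the universal case I would argue by contradiction. Suppose $x_1,\ldots,x_{m+1}$ are vertices of $G$, pairwise at distance more than $2r'$, with $\neg\psi(x_i)$ in $G$ for every $i$. Since $\{M_r(I):I\in R\}$ partitions $\mathbb{Z}$, each $x_i$ has $\lambda(x_i)\in M_r(I_i)$ for a unique $I_i\in R$, and because $\sum_{I\in R}p(I)=m<m+1$ the pigeonhole principle produces an interval $I^\star\in R$ receiving at least $p(I^\star)+1$ of the $x_i$'s. Those vertices have $\lambda$-value in $M_r(I^\star)$, are pairwise at distance more than $2r'$ in $G[\lambda^{-1}(I^\star)]$ (distances only increase when passing to an induced subgraph), and each fails $\psi$ in $G[\lambda^{-1}(I^\star)]$ because its radius-$r'$ ball is unchanged; this contradicts the hypothesis that $G[\lambda^{-1}(I^\star)]$ with $M\colonequals\lambda^{-1}(M_r(I^\star))$ satisfies $\varphi^{(p(I^\star))}$. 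Hence $G\models\varphi$.

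The step I expect to be the main obstacle is verifying the margins in the existential case, which is also why the hypothesis uses the $M_{2r}$-colouring for $C$ rather than just $M_r$: there one must transport not only the unary property $\psi$ but also lower bounds on pairwise distances between the witnesses, and for the latter one needs the radius-$2r'$ balls around the witnesses (not just their radius-$r'$ balls) to stay inside $\lambda^{-1}(I)$, which forces the witnesses to sit in $M_{2r}(I)$. Checking the low-side bound $\lambda(x)-2r'\ge i$ and the high-side bound $\lambda(x)+2r'\le i+\ell-1$ for $x$ with $\lambda(x)\in M_{2r}(I)=\{i+2r,\ldots,i+\ell-2r-1\}$ (both reducing to $r'\le r$) is routine but is exactly where the hypothesis $r\ge r'$ is used, together with $\ell>4r$ for the non-emptiness of the relevant index sets.
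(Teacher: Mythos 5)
Your proposal is correct and follows essentially the same route as the paper's proof: in the existential case you assemble the union of the $p(I)$ witnesses from each $G[\lambda^{-1}(I)]$, using $r'$-locality and the placement in $M_{2r}(I)$ to transport $\psi$ and the pairwise-distance bounds to $G$, and using the separation between distinct $M_{2r}(I)$'s for cross-interval pairs; in the universal case you pigeonhole the $m+1$ hypothetical counterexamples over the partition $\{M_r(I):I\in R\}$ and contradict the hypothesis for the overloaded interval. The extra care you take with the margin computations matches what the paper leaves implicit.
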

\begin{proof}
Let $\psi$ be the core of $\varphi$.  Suppose first that $\varphi$ is a basic existential sentence, and consider any $I\in R$.
Since $G[\lambda^{-1}(I)],C\colonequals\lambda^{-1}(M_{2r}(I))\models \varphi^{(p(I))}$, there exists a set $W_I$ of
$p(I)$ vertices of $\lambda^{-1}(M_{2r}(I))$ such that the distance between them in $G[\lambda^{-1}(I)]$ (and thus also in $G$) is greater than $2r'$
and $G[\lambda^{-1}(I)]\models \psi(w)$ for each $w\in W_I$.  Since $\psi$ is $r'$-local and $r\ge r'$, it also follows that $G\models \psi(w)$ for each $w\in W_I$.
Furthermore, note that for distinct $I,I'\in R$ and $w\in \lambda^{-1}(C_I)$ and $w'\in \lambda^{-1}(C_{I'})$, we have $d_G(w,w')>2r\ge 2r'$.
Hence, the distance between any two vertices in $W=\bigcup_{I\in R} W_I$ is greater than $2r'$, $G\models\psi(w)$ for each $w\in W$, and
$|W|=\sum_{I\in R} p(I)=m$, and thus $G\models \varphi$.

Suppose now $\varphi$ is a basic universal sentence, and for contradiction assume that $G\not\models\varphi$, and thus
there exists a set $W$ of $m+1$ vertices of $\overline{G}$ such that the distance between any two of them is greater than $2r'$
and $G\models\neg\psi(w)$ for each $w\in W$.  Note that $\{M_r(I):I\in R\}$ is a partition of integers, and since $\sum_{I\in R} p(I)=m<|W|$,
there exists $I\in R$ such that $|\lambda^{-1}(M_r(I))\cap W|>p(I)$.  Since $\psi$ is $r'$-local, we have $G[\lambda^{-1}(I)]\models\neg\psi(w)$
for each $w\in \lambda^{-1}(M_r(I))\cap W$.  However, this implies that $G[\lambda^{-1}(I)],M\colonequals\lambda^{-1}(M_r(I))\not\models \varphi^{(p(I))}$,
which is a contradiction.
\end{proof}

A converse holds for a substantial fraction of covers of integers, as we show next.
\begin{lemma}\label{lemma-certex}
Let $\varphi$ be a basic existential or universal sentence of spread $m$ and range $r'$, over graph language $L$.
Let $C,M\not\in L$ be distinct unary predicate symbols.  Let $G$ be an $L$-interpretation such that $G\models \varphi$,
and let $\lambda$ be a layering of $\overline{G}$.  Let $r\ge r'$ and $\ell>2r(3m+1)$ be integers.
For all but at most $6rm$ $(\ell,r)$-covers of integers $R$,
there exists an $m$-plan $p$ for $R$ such that for each $I\in R$ we have
$$G[\lambda^{-1}(I)],C\colonequals\lambda^{-1}(M_{2r}(I)),M\colonequals\lambda^{-1}(M_r(I))\models \varphi^{(p(I))}.$$
\end{lemma}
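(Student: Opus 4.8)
The plan is to characterise, for each interval $I$ of a cover, exactly when $G[\lambda^{-1}(I)],C\colonequals\lambda^{-1}(M_{2r}(I)),M\colonequals\lambda^{-1}(M_r(I))\models\varphi^{(k)}$, and then to distribute the ``budget'' $m$ over the intervals of a cover using a certificate for $G\models\varphi$. Throughout I will use two elementary facts about the layering $\lambda$ and the hypothesis $r\ge r'$. First, if $v\in\lambda^{-1}(M_r(I))$ then $N_{r'}[v]\subseteq\lambda^{-1}(I)$, so $G$ and $G[\lambda^{-1}(I)]$ induce the same subgraph on $N_{r'}[v]$, and as $\psi$ is $r'$-local, $G\models\psi(v)$ iff $G[\lambda^{-1}(I)]\models\psi(v)$. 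Second, if $v\in\lambda^{-1}(M_{2r}(I))$ then $N_{2r'}[v]\subseteq\lambda^{-1}(I)$, so every shortest path of length at most $2r'$ from $v$ stays inside $\lambda^{-1}(I)$; since distances cannot decrease in an induced subgraph, this gives $d_G(v,w)\le 2r'\iff d_{G[\lambda^{-1}(I)]}(v,w)\le 2r'$ for all $w$.

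For the \emph{existential case}, write $\varphi=(\exists x_1,\dots,x_m)\bigwedge_{i<j}d(x_i,x_j)>2r'\land\bigwedge_i\psi(x_i)$ and fix a witnessing set $W$ of $m$ vertices, pairwise at $G$-distance greater than $2r'$, each satisfying $\psi$ in $G$. For a cover $R$ and $w\in W$ let $I_w\in R$ be the unique interval with $\lambda(w)\in M_r(I_w)$ (the sets $M_r(I)$, $I\in R$, partition $\mathbb Z$). Call $R$ \emph{good} if $\lambda(w)\in M_{2r}(I_w)$ for all $w\in W$. A fixed integer lies in $M_r(I)\setminus M_{2r}(I)$ for its own interval in exactly $2r$ of the $\ell-2r$ covers, so all but at most $2rm\le 6rm$ covers are good. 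For a good cover $R$, put $p(I)=|\{w\in W:I_w=I\}|$; then $\sum_{I\in R}p(I)=m$, and the $p(I)$ vertices assigned to $I$ lie in $\lambda^{-1}(M_{2r}(I))$, are pairwise at $G[\lambda^{-1}(I)]$-distance greater than $2r'$ (distances did not decrease) and satisfy $\psi$ in $G[\lambda^{-1}(I)]$ (first fact), so they witness $G[\lambda^{-1}(I)],C\colonequals\lambda^{-1}(M_{2r}(I)),M\colonequals\lambda^{-1}(M_r(I))\models\varphi^{(p(I))}$.

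For the \emph{universal case}, $G\models\varphi$ says that every set of $m+1$ vertices of $\overline G$ pairwise at $G$-distance greater than $2r'$ (a $(2r')$-scattered set) contains a vertex satisfying $\psi$ in $G$. Fix $\mathcal Q\subseteq V(\overline G)$ maximal among $(2r')$-scattered sets containing no vertex satisfying $\psi$ in $G$; then $|\mathcal Q|\le m$ and, by maximality, every vertex of $G$ not satisfying $\psi$ in $G$ is at $G$-distance at most $2r'$ from $\mathcal Q$. For an interval $I$ let $n(I)$ be the largest size of a subset of $\lambda^{-1}(M_r(I))$ that is $(2r')$-scattered in $G[\lambda^{-1}(I)]$ and contains no vertex satisfying $\psi$ in $G[\lambda^{-1}(I)]$, and define $n'(I)$ identically but with $M_{2r}(I)$ in place of $M_r(I)$. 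A direct unwinding of $\varphi^{(k)}$ shows $G[\lambda^{-1}(I)],M\colonequals\lambda^{-1}(M_r(I))\models\varphi^{(k)}$ iff $k\ge n(I)$, so it suffices to prove $\sum_{I\in R}n(I)\le m$ for all but at most $6rm$ covers $R$ (then $p(I)=n(I)$, with the remaining budget distributed arbitrarily, is the desired $m$-plan). First, $\sum_{I\in R}n'(I)\le m$ for \emph{every} cover: choosing an optimal witness $V_I\subseteq\lambda^{-1}(M_{2r}(I))$ for each $I$, the union $\bigcup_{I\in R}V_I$ contains no vertex satisfying $\psi$ in $G$ (first fact), is $(2r')$-scattered in $G$ within each block (second fact) and between distinct blocks (as $M_{2r}(I)$, $M_{2r}(I')$ lie in layer intervals more than $2r\ge 2r'$ apart), so $|\bigcup_{I\in R}V_I|\le m$. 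Now suppose $\sum_{I\in R}n(I)>m$; then $n(I)>n'(I)$ for some $I$, so every optimal witness for $n(I)$ uses a vertex $u\in\lambda^{-1}(M_r(I)\setminus M_{2r}(I))$. This $u$ does not satisfy $\psi$ in $G$ (first fact), so some $q\in\mathcal Q$ has $d_G(u,q)\le 2r'$, whence $|\lambda(u)-\lambda(q)|\le 2r'$. The $2r$ layers of $M_r(I)\setminus M_{2r}(I)$ are precisely those within distance $r-\tfrac12$ of one of the two half-integer ``junction points'' of $R$ bordering $I$ (these points form an arithmetic progression of step $\ell-2r$), so $\lambda(q)$ lies within $r-\tfrac12+2r'$ of a junction point of $R$. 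For a fixed $q\in\mathcal Q$, the integer $\lambda(q)$ lies that close to a junction point for at most $2(r+2r')\le 6r$ of the $\ell-2r$ covers; summing over $\mathcal Q$ gives at most $6rm$ covers with $\sum_{I\in R}n(I)>m$.

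I expect the universal case to be the main obstacle: one must verify the two ``margin'' facts carefully (they are exactly what dictates using $M_{2r}(I)$ rather than $M_r(I)$ in the global scattered set), prove the equivalence ``$\models\varphi^{(k)}$ iff $k\ge n(I)$'' cleanly, and — most delicately — carry out the counting so that the number of covers placing a vertex of $\mathcal Q$ near a junction is at most the stated $6rm$ rather than something larger; the half-integer positioning of the junction points is what makes the per-vertex count equal $2(r+2r')\le 6r$.
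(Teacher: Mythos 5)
Your proof is correct. The existential half is essentially identical to the paper's: fix a witness set $W$, discard the at most $2rm$ covers for which some $\lambda(w)$ falls in $M_r(I)\setminus M_{2r}(I)$, and let $p$ count witnesses per block. The universal half rests on the same core idea as the paper's --- a maximal $2r'$-scattered subset ($\mathcal{Q}$, the paper's $W$) of the vertices failing $\psi$, necessarily of size at most $m$, together with a count of the covers that place such a vertex too close to a block boundary --- but the bookkeeping is genuinely different. The paper discards the at most $6rm$ covers with $W\not\subseteq\bigcup_I M_{4r}(I)$ and then shows $|Z\cap\lambda^{-1}(M_r(I))|$ cannot exceed $p(I)\ge|W\cap\lambda^{-1}(M_{4r}(I))|$ in scattered vertices via an exchange argument against $W$; you instead introduce the exact quantities $n(I)$ and $n'(I)$, prove $\sum_I n'(I)\le m$ unconditionally (by gluing the per-block witnesses into a single scattered subset of $Z$, using the $2r$-layer gaps between the sets $M_{2r}(I)$), and charge any cover with $\sum_I n(I)>m$ to a vertex of $\mathcal{Q}$ lying within $r-\tfrac12+2r'$ of a junction point, which happens for at most $2(r+2r')\le 6r$ covers per vertex. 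Your version buys a clean unconditional inequality and avoids the slightly delicate exchange step in the paper (where one must check that the swapped-in set still certifies a contradiction with the choice of $W$), at the cost of the junction-point counting; both yield the same $6rm$ bound.
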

\begin{proof}
Let $\psi$ be the core of $\varphi$.  Suppose first that $\varphi$ is a basic existential sentence.  Since $G\models \varphi$,
there exists a set $W$ of $m$ vertices of $\overline{G}$ such that the distance between any two vertices of $W$ is greater than $2r'$
and $G\models\psi(w)$ for each $w\in W$.  Note that for each $w\in W$, there are exactly $2r$ $(\ell,r)$-covers of integers $R$ such that
$\lambda(w)$ is not in $\bigcup_{I\in R} M_{2r}(I)$.  Hence, all but at most $2rm$ $(\ell,r)$-covers of integers $R$
satisfy $\lambda(w)\in\bigcup_{I\in R} M_{2r}(I)$ for each $w\in W$.  Consider such an $(\ell,r)$-cover $R$.
Let us define $p(I)=|\lambda^{-1}(M_{2r}(I))\cap W|$ for
each $I\in R$; then $p$ is an $m$-plan for $R$.  Since $\psi$ is $r'$-local and $r\ge r'$, the set $\lambda^{-1}(M_{2r}(I))\cap W$ shows that
$G[\lambda^{-1}(I)],C\colonequals\lambda^{-1}(M_{2r}(I))\models \varphi^{(p(I))}$, as required.

Suppose now that $\varphi$ is a basic universal sentence.  Let $Z$ be the set of vertices $z$ of $\overline{G}$ such that
$G\models\neg\psi(z)$, and let $W$ be a maximal subset of $Z$ such that the distance between any two vertices of $W$ is greater than $2r'$.
Since $G\models \varphi$, we have $|W|\le m$.  Hence, for all but at most $6rm$ $(\ell,r)$-covers of integers $R$
we have $\lambda(w)\in \bigcup_{I\in R} M_{4r}(I)$ for each $w\in W$.  Consider such an $(\ell,r)$-cover $R$.
By the maximality of $W$, every vertex $z\in Z$
is at distance at most $2r'\le 2r$ from $W$, and thus $\lambda(z)\in \bigcup_{I\in R} M_{2r}(I)$.  For $I\in R$, define $p(I)\ge |W\cap \lambda^{-1}(M_{4r}(I))|$
so that $p$ is an $m$-plan.
Note that $Z\cap \lambda^{-1}(M_r(I))$ does not contain more than $p(I)$ vertices at distance more than $2r'$ from one another:
if it contained such a set $W_1$ of size greater than $p(I)$, then note that $W_1\subseteq M_{2r}(I)$, and
thus vertices of $W_1$ are at distance more than $2r\ge 2r'$ from vertices of $W\setminus\lambda^{-1}(M_r(I))$,
and consequently $(W\setminus\lambda^{-1}(M_r(I)))\cup W_1$ would be a subset of $Z$ contradicting the maximality of $W$.
Since $\psi$ is $r'$-local and $r\ge r'$, we have $G[\lambda^{-1}(I)],M\colonequals\lambda^{-1}(M_r(I))\models \varphi^{(p(I))}$, as required.
\end{proof}

For a basic conjunction $\varphi\equiv\bigwedge_{i=1}^t \varphi_i$ and a $t$-tuple $q=(q_1,\ldots,q_t)$ of non-negative integers,
let $\varphi^{(q)}\equiv \bigwedge_{i=1}^t \varphi^{(q_i)}_i$.
Combining the preceding results, we obtain the following key approximation reduction.

\begin{theorem}\label{thm-layap}
Let $\varphi\equiv\bigwedge_{i=1}^t \varphi_i$ be an $X$-positive basic conjunction of spread $m$ and range $r$
over graph language $L\cup\{X\}$, with distinct unary predicate symbols $C,M\not\in L\cup\{X\}$.
For $i=1,\ldots, t$, let $m_i$ be the spread of $\varphi_i$.  Let $\varepsilon_1,\varepsilon_2>0$ be real numbers.
Let $G$ be an $L$-interpretation and let $\lambda$ be a layering of $\overline{G}$.  For any interval $I$ of consecutive integers,
let $G_I$ denote the $(L\cup\{C,M\})$-interpretation $G[\lambda^{-1}(I)],C\colonequals\lambda^{-1}(M_{2r}(I)),M\colonequals\lambda^{-1}(M_r(I))$;
for a $t$-tuple $q$ of non-negative integers, let $A_{I,q}$ be a set such that
$$G_I, X\colonequals A_{I,q}\models \varphi^{(q)}$$
of size at most $(1+\varepsilon_1)\gamma_{\varphi^{(q)}}(G_I)$; $A_{I,q}$ is undefined if $G_I, X\colonequals V(\overline{G_I})\not\models \varphi^{(q)}$.
Let $\ell>2r(1+\max(2/\varepsilon_2,6mt))$ be an integer.  For any $(\ell,r)$-cover $R$ of integers and for any $t$-tuple $p=(p_1,\ldots, p_t)$, where $p_i$ is an $m_i$-plan for $R$
for $i=1,\ldots,t$, let $p(I)=(p_1(I),\ldots,p_t(I))$ and
$$A_{R,p}=\bigcup_{I\in R} A_{I,p(I)};$$
$A_{R,p}$ is undefined if any of the terms on the right-hand side is undefined.
Let $a_{R,p}=\sum_{I\in R} |A_{I,p(I)}|$.
Let $A$ be such a set $A_{R,p}$ with $a_{R,p}$ of minimum over all choices of $R$ and $p$
for which $A_{R,p}$ is defined ($A$ is undefined if no such set $A_{R,p}$ exists).
Then either $G,X\colonequals V(\overline{G})\not\models \varphi$
and $A$ is undefined, or
$$G,X\colonequals A\models \varphi$$
and $|A|\le (1+\varepsilon_1)(1+\varepsilon_2)\gamma_\varphi(G)$.
\end{theorem}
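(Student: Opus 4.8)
\emph{Plan.} I would split the statement into two claims: (i) whenever $A=A_{R,p}$ is defined, $G,X\colonequals A\models\varphi$; and (ii) whenever $G,X\colonequals V(\overline{G})\models\varphi$, some $A_{R,p}$ is defined (so $A$ is defined) and can be chosen with $|A|\le(1+\varepsilon_1)(1+\varepsilon_2)\gamma_\varphi(G)$. Together these give the stated dichotomy: if $G$ admits no feasible solution, then by (i), $X$-positivity of $\varphi$, and $A\subseteq V(\overline{G})$, no $A_{R,p}$ can be defined, so $A$ is undefined; otherwise (ii) applies. Since the cases $t=0$ and $m=0$ are degenerate, I assume $m,t\ge1$, so the hypothesis on $\ell$ in particular gives $\ell>4r$ (as $\max(2/\varepsilon_2,6mt)\ge6$).

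\emph{Soundness.} Fix $R$ and $p$ with $A=A_{R,p}$ defined, and fix $i\in\{1,\ldots,t\}$. For each $I\in R$ we have $G_I,X\colonequals A_{I,p(I)}\models\varphi_i^{(p_i(I))}$; since $A_{I,p(I)}\subseteq A\cap\lambda^{-1}(I)$ and $\varphi_i^{(p_i(I))}$ is $X$-positive, this upgrades to $G[\lambda^{-1}(I)],X\colonequals A\cap\lambda^{-1}(I),C\colonequals\lambda^{-1}(M_{2r}(I)),M\colonequals\lambda^{-1}(M_r(I))\models\varphi_i^{(p_i(I))}$. I would then apply Lemma~\ref{lemma-cert} with the graph language $L\cup\{X\}$ and the interpretation $(G,X\colonequals A)$ in the roles of $L$ and $G$: its hypotheses hold because $p_i$ is an $m_i$-plan for $R$, the range of $\varphi_i$ is at most $r$, and $\ell>4r$; its conclusion is $G,X\colonequals A\models\varphi_i$. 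Intersecting over all $i$ yields $G,X\colonequals A\models\varphi$.

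\emph{Completeness and quality.} Assume $G,X\colonequals V(\overline{G})\models\varphi$ and pick $A^\star$ feasible with $|A^\star|=\gamma_\varphi(G)$. For each $i$ we have $G,X\colonequals A^\star\models\varphi_i$, so (since $m_i\le m$ and $t\ge1$ give $\ell>2r(3m_i+1)$) Lemma~\ref{lemma-certex} provides, for all but at most $6rm_i$ of the $(\ell,r)$-covers $R$, an $m_i$-plan $p_i^R$ for $R$ such that $G_I,X\colonequals A^\star\cap\lambda^{-1}(I)\models\varphi_i^{(p_i^R(I))}$ for every $I\in R$. Call $R$ \emph{good} if this succeeds for all $i$ at once; then at most $\sum_i 6rm_i\le6rmt$ of the $\ell-2r$ covers are bad, which by the hypothesis on $\ell$ is strictly fewer than all of them. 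For a good $R$, set $p^R=(p_1^R,\ldots,p_t^R)$; then $G_I,X\colonequals A^\star\cap\lambda^{-1}(I)\models\varphi^{(p^R(I))}$ for all $I\in R$, so each $A_{I,p^R(I)}$ is defined with $|A_{I,p^R(I)}|\le(1+\varepsilon_1)\gamma_{\varphi^{(p^R(I))}}(G_I)\le(1+\varepsilon_1)|A^\star\cap\lambda^{-1}(I)|$; hence $A_{R,p^R}$, and therefore $A$, is defined, with $|A|\le a_{R,p^R}\le(1+\varepsilon_1)\sum_{I\in R}|A^\star\cap\lambda^{-1}(I)|$ for every good $R$. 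It remains to pick the good cover cleverly. Since $\ell>4r$, every layer lies in at most two intervals of any cover, so $\sum_{I\in R}|A^\star\cap\lambda^{-1}(I)|=|A^\star|+s(R)$, where $s(R)$ counts the $v\in A^\star$ whose layer lies in a ($2r$-layer) overlap of two consecutive intervals of $R$. A fixed layer lies in such an overlap for at most $2r$ of the $\ell-2r$ covers, so $\sum_R s(R)\le2r|A^\star|$; averaging over the at least $\ell-2r-6rmt$ good covers, some good $R$ has $s(R)\le\frac{2r}{\ell-2r-6rmt}|A^\star|$. A direct check shows $\max(2/\varepsilon_2,6mt)\ge3mt+1/\varepsilon_2$ in either case of the maximum, so $\ell>2r(1+\max(2/\varepsilon_2,6mt))$ gives $\ell-2r-6rmt>2r/\varepsilon_2$, whence $s(R)\le\varepsilon_2|A^\star|$ and thus $|A|\le(1+\varepsilon_1)(1+\varepsilon_2)|A^\star|=(1+\varepsilon_1)(1+\varepsilon_2)\gamma_\varphi(G)$ for this $R$.

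\emph{Main obstacle.} I expect the cover-selection step to be the crux: estimating $\sum_{I\in R}|A^\star\cap\lambda^{-1}(I)|$ directly only yields $\le2|A^\star|$ and hence ratio $2(1+\varepsilon_1)$, so one must peel off the ``free'' term $|A^\star|$ and control only the overlap surplus $s(R)$ by an averaging argument, which --- balanced against the number of covers ruled out by Lemma~\ref{lemma-certex} --- is exactly what the quantitative hypothesis on $\ell$ is there to pay for. A subtler but essentially routine point is that, in the completeness direction, it is the restricted assignment $X\colonequals A^\star\cap\lambda^{-1}(I)$ (not all of $A^\star$) that satisfies each $\varphi_i^{(p_i^R(I))}$ on $G_I$; this is packaged inside Lemma~\ref{lemma-certex} and relies on $r$-locality of $\psi_i$ together with the nesting $I\supseteq M_r(I)\supseteq M_{2r}(I)$ of the relevant layer ranges.
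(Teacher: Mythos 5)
Your proof is correct and follows essentially the same structure as the paper's: soundness via $X$-positivity of $\varphi^{(p(I))}$ followed by Lemma~\ref{lemma-cert}, and completeness via Lemma~\ref{lemma-certex} plus an averaging argument over $(\ell,r)$-covers that exploits the same identity $\sum_R s(R)=2r|A^\star|$ (the paper writes it as $\sum_R\gamma_R=2r\gamma$). The only cosmetic difference is in the order of the averaging and the filtering: the paper applies Markov's inequality over \emph{all} covers to find that at least half have small overlap surplus and then intersects with the good covers supplied by Lemma~\ref{lemma-certex}, whereas you filter to good covers first and average only over those; both routes cash in the hypothesis $\ell>2r(1+\max(2/\varepsilon_2,6mt))$ in the same way.
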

\begin{proof}
Let $R$ and $p$ be such that $A_{R,p}$ is defined, and thus
$G_I, X\colonequals A_{I,p(I)}\models \varphi^{(p(I))}$ for each $I\in R$.
Since $\varphi^{(p(I))}$ is $X$-positive and $A_{I,p(I)}\subseteq A_{R,p}\cap\lambda^{-1}(I)$,
it follows that $G_I,X\colonequals A_{R,p}\cap\lambda^{-1}(I)\models \varphi^{(p(I))}$
for each $I\in R$.  Applying Lemma~\ref{lemma-cert} to $\varphi_1$, \ldots, $\varphi_t$,
we conclude that $G,X\colonequals A_{R,p}\models \varphi$.
In particular, if $A$ is defined, then $G,X\colonequals A\models \varphi$,
and since $\varphi$ is $X$-positive, also $G,X\colonequals V(\overline{G})\models \varphi$.

Suppose now that $G,X\colonequals V(\overline{G})\models \varphi$, and let $A_0$ be a set of size $\gamma=\gamma_\varphi(G)$
such that $G,X\colonequals A_0\models \varphi$.  Let $\RR$ be the set of all $(\ell,r)$-covers $R$ of integers; we have
$|\RR|=\ell-2r$.  For $R\in \RR$, let $\gamma_R=\Bigl(\sum_{I\in R} |A_0\cap \lambda^{-1}(I)|\Bigr)-\gamma$; clearly, $\gamma_R\ge 0$.
Note that
$$\sum_{R\in \RR} \gamma_R=\gamma\ell-\gamma|\RR|=2r\gamma,$$
and thus for at least $|\RR|/2$ elements $R$ of $\RR$ we have
$$\gamma_R\le \frac{4r\gamma}{|\RR|}=\frac{4r}{\ell-2r}\gamma\le \varepsilon_2\gamma.$$
Since $|\RR|/2=\ell/2-r>6rmt$, by Lemma~\ref{lemma-certex}, at least one of these elements $R$ has the following property:
for $i=1,\ldots,t$, there exists a $m_i$-plan $p_i$ for $R$ such that for each $I\in R$ we have
$$G_I,X\colonequals A_0\cap \lambda^{-1}(I)\models \varphi_i^{(p_i(I))},$$
and thus for $p=(p_1,\ldots,p_t)$ we have
$$G_I,X\colonequals A_0\cap \lambda^{-1}(I)\models \varphi^{(p(I))}.$$
In particular, $A_{I,p(I)}$ is defined and
$$|A_{I,p(I)}|\le(1+\varepsilon_1)\gamma_{\varphi^{(p(I))}}(G_I)\le (1+\varepsilon_1)|A_0\cap \lambda^{-1}(I)|.$$
Then
$$a_{R,p}=\sum_{I\in R} |A_{I,p(I)}|\le (1+\varepsilon_1)(\gamma+\gamma_R)\le (1+\varepsilon_1)(1+\varepsilon_2)\gamma,$$
and thus
$|A|\le a_{R,p}\le(1+\varepsilon_1)(1+\varepsilon_2)\gamma_\varphi(G)$ as required.
\end{proof}

We are now ready to prove the main result.

\begin{proof}[Proof of Theorem~\ref{thm-approx}]
Let $\varepsilon_1>\varepsilon_2>\ldots>0$ be an arbitrary sequence such that
$$\prod_{i=1}^\infty (1+\varepsilon_i)\le 1+1/k.$$
Let $\FF_0=\{\varphi\}$.  For $i\ge 1$, we define $\FF_i$, $m_i$, $r_i$, $t_i$ and $\ell_i$ as follows.
Due to Theorem~\ref{thm-local}, we can assume elements of $\FF_{i-1}$ are disjunctions of basic conjunctions.
Let $m_i$ and $r_i$ be the maximum of spreads and ranges of sentences in $\FF_{i-1}$, respectively.
Let $t_i$ be the maximum size of a basic conjunction appearing in one of the disjunctions in $\FF_{i-1}$.
Let $\ell_i=1+\lfloor 2r_i(1+\max(2/\varepsilon_i,6m_it_i))\rfloor$.  Let $\FF_i$ contain, for each $\varphi\in\FF_{i-1}$,
the two corresponding sentences $\varphi_1$ and $\varphi_2$ from Lemma~\ref{lemma-remv}, and for each basic conjunction $\theta$
(of $t_\theta$ terms) appearing in the disjunction $\varphi$ and for each $t_\theta$-tuple $q$ of integers in $\{0,\ldots,m_i\}$
the sentence $\theta^{(q)}$.

Let $\seq{l}=\ell_1,\ell_2,\ldots$; let $t$ be an integer such that Destroyer wins the Baker game on $(G,\seq{l})$ in $t$
rounds for any graph $G\in\CC$ (via the algorithms from the definition of efficiently Baker class).
By Lemma~\ref{lemma-sg}, we can assume $\CC$ is subgraph-closed.  By induction on $i$, we prove the
following claim, from which the theorem follows: For $0\le i\le t$ and any sentence $\theta\in\FF_{t-i}$ over a graph language $L\cup\{X\}$, we can
for an $L$-interpretation $G$ with $n$ vertices such that Destroyer wins (via the aforementioned strategy, given an
appropriate ordering of vertices of $\overline{G}$) the Baker game on $(\overline{G},\tail_{t-i}(\seq{l}))$ in $i$ rounds,
in time $O(n+s(n))$ find a set $A\subseteq V(\overline{G})$ satisfying
$$G,X\colonequals A\models \theta$$
and $|A|\le \Bigl(\prod_{j=t-i+1}^t(1+\varepsilon_j)\Bigr)\gamma_\theta(G)$, or determine no such set $A$ exists. 

Let $\varepsilon'_i=\Bigl(\prod_{j=t-i+1}^t(1+\varepsilon_j)\Bigr)-1$.
If $i=0$, then $V(\overline{G})=\emptyset$ and the claim is trivial; hence, suppose $i\ge 1$.
Consider the first action of Destroyer's strategy on $(\overline{G},\tail_{t-i}(\seq{l}))$,
determined in time $s(n)$.

Suppose first it is a Delete action, moving to $(\overline{G}-v,\tail_{t-i+1}(\seq{l}))$.
Let $\gamma_1$ denote the smallest size of a set $A\subseteq V(\overline{G})$ such that $v\in A$ and
$G,X\colonequals A\models \theta$, and $\gamma_2$ the smallest size of a set $A\subseteq V(\overline{G})$ such that $v\not\in A$ and
$G,X\colonequals A\models \theta$; we have $\gamma_\theta(G)=\min(\gamma_1,\gamma_2)$.
We apply the induction hypothesis to the interpretation $G-v$ with a unary predicate interpreted as $N_G(v)$
and the sentences $\theta_1$ and $\theta_2$ from Lemma~\ref{lemma-remv}, obtaining sets
$A_1,A_2\subseteq V(\overline{G})\setminus \{v\}$ such that
$G,X\colonequals A_1\cup\{v\}\models \theta$, $G,X\colonequals A_2\models \theta$
(or information that no such sets $A_1$ and $A_2$ exist) satisfying
$|A_1|\le (1+\varepsilon'_{i-1})(\gamma_1-1)$ and $|A_2|\le (1+\varepsilon'_{i-1})\gamma_2$.
We let $A$ be the smaller of the sets $A_1\cup\{v\}$ and $A_2$; clearly
$|A|\le (1+\varepsilon'_{i-1})\min(\gamma_1,\gamma_2)=(1+\varepsilon'_{i-1})\gamma_\theta(G)\le (1+\varepsilon'_i)\gamma_\theta(G)$,
as required.

Suppose next Destroyer starts with a Restrict action with layering $\lambda$.
Without loss of generality, we can assume $\theta$ is a basic conjunction (otherwise, we apply the following
procedure to each element of the disjunction and return the largest of the obtained sets) of $t'$ terms
of spreads $m'_1$, \ldots, $m'_{t'}$.  Firstly, by the induction hypothesis we obtain for any interval $I$ of $\ell_{t-i+1}$
consecutive integers such that $\lambda^{-1}(I)\neq\emptyset$ and for any $t'$-tuple $q$ of non-negative integers smaller or equal to $m_{t-i+1}$
a set $A_{I,Q}$ such that $G_I, X\colonequals A_{I,q}\models \theta^{(q)}$ of size at most $(1+\varepsilon'_{i-1})\gamma_{\theta^{(q)}}(G_I)$
(or information that no such set exists), where $G_I$ is as defined in the statement of Theorem~\ref{thm-layap}.

Next, for each $(\ell_{t-i+1},r_{t-i+1})$-cover $R$ of integers, we determine by dynamic programming a $t'$-tuple $p=(p_1,\ldots, p_{t'})$,
where $p_i$ is an $m'_i$-plan for $R$ for $i=1,\ldots,t'$, such that $a_{R,p}$ as defined in the statement of Theorem~\ref{thm-layap}
(with $\theta$ playing the role of $\varphi$) is the smallest possible:  Let $I_1$, \ldots, $I_a$ be the elements of $R$
such that $\lambda^{-1}(I_j)\neq\emptyset$ for $j=1,\ldots,n$, enumerated in any order.  For $j=1,\ldots, a$ and
for any $t'$-tuple $q=(q_1,\ldots,q_{t'})$ of non-negative integers componentwise smaller or equal to $(m'_1,\ldots,m'_{t'})$,
we determine a $t'$-tuple $p_{q,j}$ of $q_1$, \ldots, $q_{t'}$-plans for $R$ with support contained in $\{I_1,\ldots,I_j\}$
such that $a_{R,p_{q,j}}$ is the smallest possible.  Note that to determine $p_{q,j}$, it suffices to take the minimum
of $a_{R,p_{q-l,j-1}}+|A_{I_j,l}|$ over all the possible
$t'$-tuples $l$ of values the plans can assign to $I_j$, which can be done (using the precomputed
values for $j-1$) in time $O(m_{t-i+1}^{t'})$, which is constant.  Hence, this step altogether takes time $O(a)=O(n)$.

Let $A$ be the set $A_{R,p}$ with $a_{R,p}$ minimum, over all choices of $R$ and $p$.  By Theorem~\ref{thm-layap},
such set exists if and only if $G,X\colonequals V(\overline{G})\models \theta$, and in this case
$G,X\colonequals A\models \theta$
$|A|\le (1+\varepsilon_{t-i+1})(1+\varepsilon'_{i-1})\gamma_\theta(G)=(1+\varepsilon'_i)\gamma_\theta(G)$
as required.

For the time complexity, note that in the Restrict case, we recurse for each interval $I$ with only a constant number of sentences,
and thus the total number of vertices of the interpretations is $O(\sum_I |\lambda^{-1}(I)|)=O(\ell_{t-i+1}n)=O(n)$.
Since the recursion has a constant depth, it follows that the total time complexity is $O(n+s(n))$ (plus $f(n)$ for the initialization
of the strategy).
\end{proof}

\subsection{PTAS for negative FO maximization problems}\label{sec-negative}

The algorithm for negative FO maximization problems is analogous to the minimization one; instead of Theorem~\ref{thm-layap},
we use the following reduction.  For an $X$-negative basic conjunction $\varphi$ and a tuple $q$ of integers, let
$\varphi^{(q),C}$ denote the $X$-negative sentence $\varphi^{(q)}\land (\forall x) (C(x)\lor \neg X(x))$.
\begin{theorem}\label{thm-layap-neg}
Let $\varphi\equiv\bigwedge_{i=1}^t \varphi_i$ be an $X$-negative basic conjunction of spread $m$ and range $r$
over graph language $L\cup\{X\}$, with distinct unary predicate symbols $C,M\not\in L\cup\{X\}$.
For $i=1,\ldots, t$, let $m_i$ be the spread of $\varphi_i$.  Let $\varepsilon_1,\varepsilon_2>0$ be real numbers.
Let $G$ be an $L$-interpretation and let $\lambda$ be a layering of $\overline{G}$.  For any interval $I$ of consecutive integers,
let $G_I$ denote the $(L\cup\{C,M\})$-interpretation $G[\lambda^{-1}(I)],C\colonequals\lambda^{-1}(M_{2r}(I)),M\colonequals\lambda^{-1}(M_r(I))$;
for a $t$-tuple $q$ of non-negative integers, let $A_{I,q}$ be a set such that
$$G_I, X\colonequals A_{I,q}\models \varphi^{(q),C}$$
of size at least $(1-\varepsilon_1)\alpha_{\varphi^{(q),C}}(G_I)$; $A_{I,q}$ is undefined if
$G_I, X\colonequals \emptyset\not\models \varphi^{(q),C}$.
Let $\ell>2r(1+\max(2/\varepsilon_2,6mt))$ be an integer.  For any $(\ell,r)$-cover $R$ of integers and
for any $t$-tuple $p=(p_1,\ldots, p_t)$, where $p_i$ is an $m_i$-plan for $R$
for $i=1,\ldots,t$, let $p(I)=(p_1(I),\ldots,p_t(I))$ and
$$A_{R,p}=\bigcup_{I\in R} A_{I,p(I)};$$
$A_{R,p}$ is undefined if any of the terms on the right-hand side is undefined.
Let $A$ be such a set $A_{R,p}$ of maximum size over all choices of $R$ and $p$
for which $A_{R,p}$ is defined ($A$ is undefined if no such set $A_{R,p}$ exists).
Then either $G,X\colonequals \emptyset\not\models \varphi$
and $A$ is undefined, or
$$G,X\colonequals A\models \varphi$$
and $|A|\ge (1-\varepsilon_1)(1-\varepsilon_2)\alpha_\varphi(G)$.
\end{theorem}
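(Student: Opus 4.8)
The plan is to run the proof of Theorem~\ref{thm-layap} with every inequality reversed, the extra conjunct $(\forall x)(C(x)\lor\neg X(x))$ of $\varphi^{(q),C}$ carrying the weight that $X$-positivity carried there. First I would record the combinatorial fact about an $(\ell,r)$-cover $R$ that is used throughout: since $\ell>4r$, intervals of $R$ that are two or more steps apart are disjoint, consecutive ones overlap only in their $2r$ extreme layers, and in particular $M_{2r}(I')\cap I=\emptyset$ for all distinct $I,I'\in R$. Consequently, if a vertex set $B$ satisfies $\lambda(B)\subseteq\bigcup_{I\in R}M_{2r}(I)$, then $B\cap\lambda^{-1}(I)=B\cap\lambda^{-1}(M_{2r}(I))$ for every $I\in R$ and these pieces partition $B$.

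\emph{Soundness.} Suppose $R$ and $p$ are such that $A_{R,p}$ is defined, so $G_I,X\colonequals A_{I,p(I)}\models\varphi^{(p(I)),C}$ for each $I\in R$. The conjunct $(\forall x)(C(x)\lor\neg X(x))$ forces $\lambda(A_{I,p(I)})\subseteq M_{2r}(I)$, so by the fact above $A_{R,p}\cap\lambda^{-1}(I)=A_{I,p(I)}$, and hence $G_I,X\colonequals A_{R,p}\cap\lambda^{-1}(I)\models\varphi_i^{(p_i(I))}$ for every $i$ and $I$. Applying Lemma~\ref{lemma-cert} to each $\varphi_i$ (with the interpretation $G,X\colonequals A_{R,p}$ and the $m_i$-plan $p_i$) gives $G,X\colonequals A_{R,p}\models\varphi_i$ for all $i$, hence $G,X\colonequals A_{R,p}\models\varphi$. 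In particular, if $A$ is defined then $G,X\colonequals A\models\varphi$, and, $\varphi$ being $X$-negative, also $G,X\colonequals\emptyset\models\varphi$; contrapositively, if $G,X\colonequals\emptyset\not\models\varphi$ then no $A_{R,p}$ and hence no $A$ is defined, which is the first alternative of the conclusion.

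\emph{Approximation.} Assume $G,X\colonequals\emptyset\models\varphi$ and fix $A_0$ of size $\alpha=\alpha_\varphi(G)$ with $G,X\colonequals A_0\models\varphi$. As in the proof of Lemma~\ref{lemma-certex}, each vertex has its layer outside $\bigcup_{I\in R}M_{2r}(I)$ for exactly $2r$ of the $\ell-2r$ covers, so $\sum_R|A_0\setminus\lambda^{-1}(\bigcup_{I\in R}M_{2r}(I))|=2r\alpha$ and at least $(\ell-2r)/2$ covers $R$ satisfy $|A_0\cap\lambda^{-1}(\bigcup_{I\in R}M_{2r}(I))|\ge\alpha-\frac{4r}{\ell-2r}\alpha\ge(1-\varepsilon_2)\alpha$. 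Independently, applying Lemma~\ref{lemma-certex} to $G,X\colonequals A_0$ and each $\varphi_i$ shows that all but at most $6rmt$ covers $R$ admit $m_i$-plans $p_i$ with $G_I,X\colonequals A_0\cap\lambda^{-1}(I)\models\varphi_i^{(p_i(I))}$ for all $i$ and $I\in R$. Since $(\ell-2r)/2>6rmt$, I pick a cover $R$ meeting both conditions; put $p=(p_1,\ldots,p_t)$ and $B_I=A_0\cap\lambda^{-1}(M_{2r}(I))$. By $X$-negativity, shrinking $X$ from $A_0\cap\lambda^{-1}(I)$ to $B_I$ preserves each $\varphi_i^{(p_i(I))}$, and since $\lambda(B_I)\subseteq M_{2r}(I)$ the extra conjunct of $\varphi^{(p(I)),C}$ holds as well, so $A_{I,p(I)}$ is defined with $|A_{I,p(I)}|\ge(1-\varepsilon_1)|B_I|$. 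As the sets $A_{I,p(I)}$ lie in the pairwise-disjoint regions $\lambda^{-1}(M_{2r}(I))$ they are disjoint, so $|A_{R,p}|=\sum_I|A_{I,p(I)}|\ge(1-\varepsilon_1)\sum_I|B_I|=(1-\varepsilon_1)\bigl|A_0\cap\lambda^{-1}(\bigcup_I M_{2r}(I))\bigr|\ge(1-\varepsilon_1)(1-\varepsilon_2)\alpha$, whence $|A|\ge(1-\varepsilon_1)(1-\varepsilon_2)\alpha_\varphi(G)$.

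\emph{Main obstacle.} The only real departure from the positive case is in soundness: Theorem~\ref{thm-layap} enlarges $A_{I,p(I)}$ to $A_{R,p}\cap\lambda^{-1}(I)$ using $X$-positivity, but for an $X$-negative sentence enlargement breaks satisfaction, so I need $A_{R,p}\cap\lambda^{-1}(I)$ to equal $A_{I,p(I)}$ exactly — which is precisely what $\varphi^{(q),C}$ enforces by confining the local solution to the region $\lambda^{-1}(M_{2r}(I))$. The bookkeeping to be careful with is checking $M_{2r}(I')\cap I=\emptyset$ for distinct $I,I'$ in a cover and that the $2r$-shrunk regions still account for a $(1-\varepsilon_2)$-fraction of $A_0$; both come down to the same density estimate already made for Lemma~\ref{lemma-certex}.
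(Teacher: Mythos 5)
Your proposal is correct and follows essentially the same route as the paper's proof: the extra conjunct of $\varphi^{(q),C}$ is used exactly as in the paper to force $A_{R,p}\cap\lambda^{-1}(I)=A_{I,p(I)}$ so that Lemma~\ref{lemma-cert} applies, and the approximation bound combines the same averaging over covers (discarding the $2r$-wide gaps between the sets $M_{2r}(I)$) with Lemma~\ref{lemma-certex} and $X$-negativity to shrink $A_0\cap\lambda^{-1}(I)$ to $A_0\cap\lambda^{-1}(M_{2r}(I))$. The only differences are presentational (you make the disjointness of the regions $\lambda^{-1}(M_{2r}(I))$ and the count $(\ell-2r)/2>6rmt$ explicit).
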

\begin{proof}
Let $R$ and $p$ be such that $A_{R,p}$ is defined, and thus
$G_I, X\colonequals A_{I,p(I)}\models \varphi^{(p(I)),C}$ for each $I\in R$.
The term $(\forall x) (C(x)\lor \neg X(x))$ in $\varphi^{(p(I)),C}$
ensures that $A_{R,p}\subseteq \bigcup_{I\in R} \lambda^{-1}(M_{2r}(I))$, and
consequently $A_{I,p(I)}=A_{R,p}\cap\lambda^{-1}(I)$ for each $I\in R$.
We conclude that $G_I,X\colonequals A_{R,p}\cap\lambda^{-1}(I)\models \varphi^{(p(I))}$
for each $I\in R$.  Applying Lemma~\ref{lemma-cert} to $\varphi_1$, \ldots, $\varphi_t$,
we conclude that $G,X\colonequals A_{R,p}\models \varphi$.
In particular, if $A$ is defined, then $G,X\colonequals A\models \varphi$,
and since $\varphi$ is $X$-negative, also $G,X\colonequals \emptyset\models \varphi$.

Suppose now that $G,X\colonequals \emptyset\models \varphi$, and let $A_0$ be a set of size $\alpha=\alpha_\varphi(G)$
such that $G,X\colonequals A_0\models \varphi$.  Let $\RR$ be the set of all $(\ell,r)$-covers $R$ of integers; we have
$|\RR|=\ell-2r$.  For $R\in \RR$, let $\alpha_R=\alpha-\Bigl(\sum_{I\in R} |A_0\cap \lambda^{-1}(M_{2r}(I))|\Bigr)$; clearly, $\alpha_R\ge 0$.
Note that
$$\sum_{R\in \RR} \alpha_R=\alpha|\RR|-\alpha(\ell-4r)=2r\alpha,$$
and thus for at least $|\RR|/2$ elements $R$ of $\RR$ we have
$$\alpha_R\le \frac{4r\alpha}{|\RR|}=\frac{4r}{\ell-2r}\alpha\le \varepsilon_2\alpha.$$
By Lemma~\ref{lemma-certex}, at least one of these elements $R$ has the following property:
for $i=1,\ldots,t$, there exists a $m_i$-plan $p_i$ for $R$ such that for each $I\in R$ we have
$$G_I,X\colonequals A_0\cap \lambda^{-1}(I)\models \varphi_i^{(p_i(I))},$$
and thus for $p=(p_1,\ldots,p_t)$ we have
$$G_I,X\colonequals A_0\cap \lambda^{-1}(I)\models \varphi^{(p(I))}.$$
Since $\varphi^{(p(I))}$ is $X$-negative, it follows that
$$G_I,X\colonequals A_0\cap \lambda^{-1}(M_{2r}(I))\models \varphi^{(p(I))},$$
and thus
$$G_I,X\colonequals A_0\cap \lambda^{-1}(M_{2r}(I))\models \varphi^{(p(I)),C}.$$
In particular, $A_{I,p(I)}$ is defined and
$$|A_{I,p(I)}|\ge(1-\varepsilon_1)\alpha_{\varphi^{(p(I),C)}}(G_I)\ge (1-\varepsilon_1)|A_0\cap \lambda^{-1}(M_{2r}(I))|.$$
Then
\begin{align*}
|A_{R,p}|&=\sum_{I\in R} |A_{I,p(I)}|\ge (1-\varepsilon_1)\sum_{I\in R} |A_0\cap \lambda^{-1}(M_{2r}(I))|\\
&=(1-\varepsilon_1)(\alpha-\alpha_R)\ge (1-\varepsilon_1)(1-\varepsilon_2)\alpha,
\end{align*}
and thus
$|A|\ge |A_{R,p}|\ge (1-\varepsilon_1)(1-\varepsilon_2)\alpha_\varphi(G)$ as required.
\end{proof}

A straightforward variation on the proof of Theorem~\ref{thm-approx} now gives Theorem~\ref{thm-approx-max}.

\subsection{Non-FO problems}\label{sec-other}

While we presented the PTASes in the meta-algorithmic setting of monotone first-order optimization problems,
the first-order definability is not crucially necessary for the technique to apply; we only use it to derive the
required locality properties.  In particular, most problems amenable to the Baker's technique can be dealt with;
one only needs to determine how to handle the vertex deletion, which usually does not cause difficulties (although it
may make it necessary to work with a more general problem to keep track of the information about the deleted vertices).
Let us mention that most of previous works on the topic eventually reduce the problem to the bounded treewidth case;
we end with an empty graph, and thus we avoid requiring tractability of the problem on bounded treewidth graphs
(but anyway, this generally is not an issue).

As an example, let us consider the problem of finding a largest $c$-colorable induced subgraph (for a fixed constant $c$);
this is a monotone maximization problem.  We will need to consider a more general problem: Given a graph $G$ and an assignment $L$
of lists of size at most $c$ to vertices of $G$, find a largest induced $L$-colorable subgraph.  When dealing with a Delete action
removing a vertex $v$, we take the largest of the outcomes from the following subcases:
\begin{itemize}
\item For each color $a\in L(v)$,
recurse on $G-v$ with the list assignment obtained from $L$ by removing $a$ from the lists of neighbors of $v$, and add $v$ to the
subgraph obtained from this recursive call.
\item Recurse on $G-v$ with the same list assignment $L$.
\end{itemize}
When dealing with the Restrict action with layering $\lambda$, we take the largest of the outcomes of the following
procedure applied to all $(\ell,0)$-covers $R$ of integers:
For each $I\in R$, we recurse on $G[M_1(I)]$ (with the same list assignment) and take the union of all returned subgraphs
(they are non-adjacent, so their colorings do not conflict).  As there is a choice of $R$ where $\bigcup_{I\in R} M_1(I)$ covers
at least $(1-2/\ell)$ fraction of an optimal solution, this is guaranteed to multiply the approximation ratio by at most
$(1-2/\ell)$.  Hence, when we run the game over a sufficiently fastly growing sequence of $\ell$'s,
the total approximation ratio can be arbitrarily close to $1$.

\section*{Acknowledgments}

I would like to thank Sebastian Siebertz for helpful discussions.

\bibliographystyle{acm}
\bibliography{thin}

\end{document}